\newtheorem{theorem}{Theorem}[section]
\newtheorem{lemma}[theorem]{Lemma}
\newtheorem{definition}[theorem]{Definition}
\DeclareMathOperator*{\argmin}{arg\,min}
\newcommand{\x}{\mathbf{x}}
\newcommand{\emdash}{\,---\,}
\newcommand{\bsigma}{\pi}
\DeclarePairedDelimiter{\set}{\{}{\}}
\title{Axioms for AI Alignment from Human Feedback}
\author{%
  Luise Ge \\
  Washington University in St. Louis \\
  \texttt{\small g.luise@wustl.edu} \\
  \And
  Daniel Halpern \\
  Harvard University \\
  \texttt{\small dhalpern@g.harvard.edu} \\
  \And
  Evi Micha \\
  Harvard University \\
  \texttt{\small emicha@seas.harvard.edu} \\
  \And
  Ariel D. Procaccia \\
  Harvard University \\
  \texttt{\small arielpro@g.harvard.edu} \\
  \And
  Itai Shapira \\
  Harvard University \\
  \texttt{\small itaishapira@g.harvard.edu} \\
  \And
  Yevgeniy Vorobeychik \\
  Washington University in St. Louis \\
  \texttt{\small yvorobeychik@wustl.edu} \\
  \And
  Junlin Wu \\
  Washington University in St. Louis \\
  \texttt{\small junlin.wu@wustl.edu} \\
}
\begin{document}

\maketitle

\begin{abstract}
In the context of reinforcement learning from human feedback (RLHF), the reward function is generally derived from maximum likelihood estimation of a random utility model based on pairwise comparisons made by humans. The problem of learning a reward function is one of preference aggregation that, we argue, largely falls within the scope of social choice theory. From this perspective, we can evaluate different aggregation methods via established axioms, examining whether these methods meet or fail well-known standards. We demonstrate that both the Bradley-Terry-Luce Model and its broad generalizations fail to meet basic axioms. In response, we develop novel rules for learning reward functions with strong axiomatic guarantees. A key innovation from the standpoint of social choice is that our problem has a \emph{linear} structure, which greatly restricts the space of feasible rules and leads to a new paradigm that we call \emph{linear social choice}.
\end{abstract}

\section{Introduction}

The alignment of AI models with human values is widely recognized as a crucial task. A prominent method for this task, {\em reinforcement learning with human feedback} (RLHF), has been used in different applications, such as robotics~\cite{biyik2020active, kupcsik2018learning}  and recommendations~\cite{viappiani2010optimal, ailon2010preference}. Recently, RLHF has attracted significant attention as a tool for fine-tuning large language models (LLMs)~\cite{instructgpt, ziegler2019fine, stiennon2020learning}. A typical implementation of RLHF involves learning a reward model using a pre-trained LLM, which is then utilized to fine-tune an existing LLM. During the learning step, human feedback is provided in the form of ordinal comparisons, and a reward function is learned from these. The most common learning method assumes an underlying \emph{random utility model} such as the \emph{Bradley-Terry-Luce (BTL)} model~\cite{bradley1952rank,instructgpt, christiano2017deep} and computes a reward function that corresponds to a maximum likelihood estimator for the observed comparisons. 

Is this the ``right'' way of aggregating individual preferences towards a socially desirable reward function?
To answer this question, we draw on \emph{social choice theory}, a field that studies collective decision making through a mathematical lens~\cite{BCEL+16}. The maximum likelihood estimation approach is in line with a well-established body of work that assumes that different human participants have preferences stemming from noisy estimation of a common ground truth, and the goal is to learn this ground truth as accurately as possible~\cite{Young88}. But this is not the case when it comes to questions of AI alignment, where individuals can have legitimate differences of opinion rooted in different values or priorities.

We argue 
that when preferences are truly heterogeneous, the \emph{axiomatic approach}\emdash which rose to prominence in social choice with the work of \citet{Arr51}\emdash may be more suitable. This approach analyzes the desirability of aggregation methods by their satisfaction of certain axioms that capture notions of consensus, fairness, and economic efficiency. Specifically, we are interested in the axiomatic properties of aggregation methods that take ordinal preferences as input and output a reward function. 
We address the following two research questions:
\emph{What axioms are satisfied  by aggregation methods used by existing RLHF algorithms? And are there alternative aggregation methods that offer stronger axiomatic guarantees?}

\subsection{Our Approach}

In social choice theory, axioms are typically defined for rules that map rankings over candidates to a single winner (social choice functions) or a ranking of the candidates (social welfare functions). By contrast, we are interested in rules that assign a reward to each candidate. This gap is easy to bridge, though: we simply consider a ranking of the candidates by in descending reward order. 

A much more significant gap is that in classical social choice, all relevant candidates appear in the input preferences, whereas in our setting (where candidates correspond, e.g., to prompts and their responses), we are only given preferences over a relatively small set of candidates \emph{identified by their (known) features}, and we need to generalize from this information.  
In practice, this entails using a restricted---commonly, parametric---class of reward models which map candidate features to real-valued rewards, and which we fit to existing data.

Specifically, we assume that a \emph{linear} reward function defined by a parameter vector determines the reward of each candidate by computing the inner product of the parameter vector and the feature vector of the candidate; these modeling choices are consistent with prior and concurrent work~\cite{zhu2023principled, zhong2024provable,ge2024learning} and aim to capture the practice of RLHF.\footnote{We can represent the reward model as an embedding layer $\phi(x)$ which is then aggregated linearly to compute the final reward. If we fix the embedding function $\phi$ and treat its output as the feature representation of the outcomes, such as prompt-response pairs, the resulting reward model is linear in $\phi(x)$; see, e.g.~\cite{zhu2023principled}.} Each human participant (henceforth referred to as a \emph{voter}) is associated with a parameter vector, which is unknown to us and is used to specify ordinal preferences over the candidates. Our task is to design \emph{linear rank aggregation rules}, which aggregate rankings induced by these individual linear functions\footnote{In practice, typical RLHF datasets consist of pairwise comparisons, not complete rankings. Assuming rankings as input makes our exposition cleaner, and it is not a fundamental limitation, as we discuss in \cref{sec:disc}.} into a collective ranking that is also induced by a linear function; this is a new paradigm in social choice, for which we coin the term \emph{linear social choice}. 

To evaluate linear rank aggregation rules, we adapt fundamental axioms from social choice theory~\cite{BCEL+16}. The first is \emph{Pareto optimality (PO)}, which requires that if a candidate $a$ is ranked above candidate $b$ in {\em every} input ranking,  then the resulting ranking should rank $a$ above $b$. This is seen as a basic requirement and is satisfied by \emph{every} standard voting method in the classical setting. 

The second axiom is \emph{pairwise majority consistency (PMC)}: If there exists a reward function that generates a ranking where, for each pair of candidates, a majority of voters agree with the ranking, then the resulting ranking should  match that ranking. This axiom is an extension of \emph{Condorcet consistency} to rankings, and is satisfied by some, but not all, standard voting methods in the classical setting.

\subsection{Our Results}
We start by examining, in \Cref{subsec:standard}, a family of loss-based rules that finds a ranking induced by a parameter vector that optimizes a measure of loss; this measure increases for every disagreement with a voter on a pair of alternatives, where the larger the difference in rewards, the larger the penalty. Crucially, by plugging in binary cross-entropy loss we can recover the BTL model. Our first main result is that whenever the loss function is weakly convex and nondecreasing, or strictly convex\emdash conditions satisfied by binary cross-entropy loss, as well as, e.g., exponential and hinge loss\emdash the corresponding rule fails both PMC and PO. This result suggests that the prevailing practice of RLHF is flawed from an axiomatic viewpoint.

In \Cref{subsec:majority}, we take a first step towards addressing this shortcoming. We modify the loss-based formulation to focus on majority preferences rather than individual preferences. This modification defines a family of rules that are PMC, but we show that all of them fail PO by establishing an even stronger impossibility result: In stark contrast to the classical setting, any \emph{linear} rank aggregation rule that depends only on majority preferences must fail PO. 

In order to achieve both PO and PMC, we design (in \Cref{sec:social}) a linear rank aggregation rule that we call \emph{Leximax Copeland subject to PO}. Not only does it satisfy our two main axioms, it also satisfies two additional ones, \emph{majority consistency} and \emph{winner monotonicity}.

To summarize, while widely applied rules fail to meet basic axioms, there are alternative methods that are desirable from this viewpoint. Our approach, therefore, provides a discriminative lens through which to evaluate RLHF methods and AI alignment methods more broadly.

\subsection{Related Work}

During the eight months in which we have actively worked on this project (from September 2023 until May 2024)\emdash and especially in the first few months of 2024\emdash a slew of independent, concurrent papers seeking to build bridges between social choice and RLHF have become publicly available~\cite{conitzer2024social, dai2024mapping, mishra2023ai,zhong2024provable,park2024principled,chakraborty2024maxmin,ge2024learning,swamy2024minimaximalist,siththaranjan2023distributional}; this surge of interest points, in our view, to the importance of the agenda. 

Three of those papers are position papers that conceptually support our work in that they discuss the possibility of applying an axiomatic approach to RLHF~\cite{conitzer2024social, dai2024mapping, mishra2023ai}, although they do not provide any technical results. By contrast, existing technical papers on RLHF do not take an axiomatic approach. Of the concurrent technical papers, the one that is most closely related to ours is that of \citet{siththaranjan2023distributional}. They show, among other results, that the ranking induced by the reward function that the MLE estimator of the Bradley-Terry-Luce Model returns follows the famous Borda count rule when unrestricted reward functions are allowed. In the classical setting, Borda count has strong axiomatic guarantees, including PO (but not PMC). However, it cannot be realized as a linear rank aggregation rule, and it is arguably impractical for RLHF.  

Our work builds on an earlier study by \citet{noothigattu2020axioms}, which explores the axiomatic properties of reward functions defined as MLE estimators of underlying random utility models. The key difference is that their approach allows for general reward functions, not just linear ones, and they do not consider features at all. Unlike our findings, they show that the BTL model satisfies Pareto Optimality under these conditions. Additionally, they find that pairwise majority consistency is violated even without assuming linearity. However, their results strongly depend on varying the number of comparisons across different pairs of candidates. By contrast, our findings demonstrate that pairwise majority consistency is violated even when the number of comparisons is equal across all pairs of candidates.

\section{The Linear Social Choice Model}\label{sec:model}

Let \(C\) be a set of \(m\) distinct prompt/responses, referred to as \emph{candidates}, and let \(V = \{1, \ldots, n\}\) be a set of \(n\) human participants, known as \emph{voters}. We denote by \(\mathbb{R}^d\) the \(d\)-dimensional real space in which both candidate feature vectors and chosen parameter vectors lie.

Each candidate \(c \in C\) is associated with a distinct feature vector \(\mathbf{x}_c \in \mathbb{R}^d\). A parameter vector $\theta \in \mathbb{R}^d$ induces a linear reward function $r_\theta: C \mapsto R$ defined by taking the dot product with feature vectors
\(
    r_\theta(c) = \langle \theta, \x_c \rangle.
\)
We will primarily be interested in how these parameterized functions rank the candidates by reward.  Let $R^{a \succ b} = \{\theta \mid r_{\theta}(a) \ge r_\theta(b)\}$ be the region where the reward of $a$ is at least as large as that of $b$. Note that $R^{a \succ b}$ and $R^{b \succ a}$ split $\mathbb{R}^d$ into two half spaces, separated by the hyperplane orthogonal to $\x_a - \x_b$. Parameter vectors $\theta$ on the hyperplane have $r_\theta(a) = r_\theta(b)$, while rankings in the interior of either half-space strictly rank one over the other.

For a ranking $\sigma$ over the candidates, we say that $\theta$ induces $\sigma$, denoted $\theta \triangleright \sigma$,  if $a \succ_\sigma b$ implies $r_\theta(a) \ge r_\theta(b)$. Let $R^{\sigma} = \set{\theta \mid \theta \triangleright \sigma}$ be the set of vectors $\theta$ that induce it. Note that this can be written as the intersection of corresponding half spaces $R^{\sigma} = \bigcap_{a, b : a \succ_\sigma b} R^{a \succ b}$. Further, the collection of $\{R^\sigma\}$ essentially form a partition of $\mathbb{R}^d$, covering the space and intersecting only at their boundaries.

We call a $\theta$ \emph{non-degenerate} if it is fully on one side of each of the separating hyperplanes, i.e., $r_\theta(a) \ne r_\theta(b)$ for all $a, b \in C$. Non-degenerate parameter vectors lie in the interior of some $R^\sigma$, and thus induce exactly one ranking. We call $\sigma$ \emph{feasible} if $R^\sigma$ has a nonempty interior, i.e., is induced by some nondegenerate $\theta$.\footnote{The number of feasible rankings is in general upper bounded by $m^{O(d)}$ due to how many regions $\binom{m}{2}$ hyperplanes can partition the space into~\cite{Ed87}. Further, under mild conditions on the feature vector locations, the exact number of feasible rankings is known~\cite{cover1967number,gould1974note}.} 

Each voter $i \in V$ submits a ranking over the candidate $\sigma_i$. We assume that the feature space is rich enough that voter preferences can be captured via non-degenerate parameter vectors. In other words, we assume that each $\sigma_i$ is feasible. We refer to the vector of voter rankings $\bsigma = (\sigma_i)_{i \in V}$ as a \emph{profile}. Further, for two candidates $a, b$, we write $n_{a \succ b}(\bsigma) := |\{i \in V \mid a \succ_{\sigma_i} b\}|$ for the number of voters that prefer $a$ to $b$, and $w_{a \succ b}(\bsigma) = n_{a \succ b}(\bsigma) / n$ for the proportion of such voters. When the profile $\bsigma$ is clear from context, we may shorten these to $n_{a \succ b}$ and $w_{a \succ b}$, respectively.

We define a \emph{parameter aggregation rule} as a function that takes as input a profile $\bsigma$ and outputs a parameter vector $\theta^*$. Our goal is to design parameter aggregation rules such that $r_{\theta^*}$ satisfies desirable properties with respect to the voter preferences. However, as the properties we care about will only be with respect to how $r_{\theta^*}$ \emph{ranks} the candidates, it will be more convenient to work with what we call \emph{linear rank aggregation rules} that take as input a profile $\bsigma$ and output a \emph{feasible} ranking $\sigma$. There is a natural way to interpret a parameter aggregation rule as a linear rank aggregation rule, namely, output any feasible ranking induced by $\theta^*$. The exact properties of the parameter aggregation rule could in principle be sensitive to the tie-breaking of non-degenerate outputs, however, all of our results will be robust to such tie-breaking.\footnote{\label{ftn}One may wonder if there are any computational barriers to converting between parameter and linear rank aggregation rules. However, this is not the case.
In particular,  for \emph{every} set of pairwise comparisons $R = \set{a_1 \succ b_1, a_2 \succ b_2, a_3 \succ a_3, \ldots}$, we can efficiently (i) check if there is a feasible ranking $\sigma$ consistent with $R$, and (ii) if such a $\sigma$ exists, find a nondegenerate $\theta$ inducing such a $\sigma$. This can be done by finding a $\theta$ satisfying the following system of linear inequalities, or determining that no such $\theta$ can satisfy them (which can be done using a linear program): $r_\theta(a) \ge r_\theta(b) + 1,\ \forall a\succ b \in R.$ Any $\theta$ satisfying the system would be nondegenerate and induce a $\sigma$ consistent with $R$. Furthermore, if a feasible ranking $\sigma$ is consistent with $R$, then taking any nondegenerate $\theta$ inducing $\sigma$ and scaling up its values would satisfy all inequalities.
This means that given a ranking $\sigma = c_1 \succ c_2 \succ \cdots \succ c_m$, we can check whether or not it is feasible, and so, find a $\theta$ inducing it by running this with $R = \set{c_1 \succ c_2, \ldots, c_{m - 1} \succ c_m}$. Additionally, given a possibly degenerate $\theta$, we can find a feasible ranking $\sigma$ which $\theta$ induces by running this with $R = \set{a \succ b \mid r_\theta(a) > r_\theta(b)}$.}

We pay special attention to a prominent family of rules from social choice theory referred to as \emph{$C1$ rules}~\cite{fishburn1977condorcet}, whose outputs depend only on majority relationships, i.e., they only need to know for each pair of candidates $(a, b)$ whether the majority prefers $a$ or $b$.

In our study, we examine several axioms borrowed from social choice theory to evaluate the reasonableness (fairness) of our aggregation mechanisms. These axioms include:

\begin{definition}[Pareto Optimality]\label{def:pareto}
A linear rank aggregation rule $f$ satisfies \emph{Pareto optimality} if, whenever every voter prefers candidate \(a\) over candidate \(b\) on $\bsigma$, i.e., $w_{a \succ b}(\bsigma) = 1$, then candidate \(a\) is ranked higher than candidate \(b\) in the output ranking, i.e., $a \succ_{f(\bsigma)} b$.
\end{definition}

\begin{definition}[Pairwise Majority Consistency (PMC)]\label{def:pmc}
A ranking $\sigma$ is called a \emph{PMC ranking for profile $\bsigma$} if for all $a, b \in C$, $a \succ_\sigma b$ if and only if a majority of voters rank $a \succ_{\sigma_i} b$, i.e., $w_{a \succ b} > 1/2$.
A linear rank aggregation rule satisfies PMC if, when a PMC ranking $\sigma$ exists for the input profile $\bsigma$ and $\sigma$ is feasible, then $f(\bsigma) = \sigma$.
\end{definition}
Note that a PMC ranking for each $\bsigma$ need not exist, but when one does, it is unique. The words ``$\sigma$ is feasible" allude to the possibility that no non-degenerate parameter vector \(\theta\) induces the unique PMC ranking. Indeed, we have such an example; see \Cref{app:proof_pmc_feasibility} for details.

Our research question, then, is whether these axioms can be simultaneously satisfied by \emph{linear} rank aggregation rules. Our approach seeks to provide a concrete illustration of how theoretical insights from social choice can inform practical algorithm design in RLHF.

\section{Loss-Based Rules}\label{sec:loss-based}

\subsection{Standard Loss Formulation}
\label{subsec:standard}

We begin our study of linear social choice by considering a quite broad yet natural class of rules that capture how RLHF is currently being done. Their core idea is the following: when considering parameter vector $\theta$, for each voter $i$ that ranks a pair of candidates $a \succ_i b$, we should incur some loss for giving $b$ a higher reward than $a$. To formalize this, let $\ell : \mathbb{R} \to \mathbb{R}$ be a \emph{loss function}, which we assume is nonnegative. We can then choose a parameter vector minimizing
\[ \mathcal{L}(\theta; \bsigma, \ell) = \sum_{a \ne b \in C} n_{a \succ b}(\bsigma) \cdot \ell(r_\theta(b) - r_\theta(a)). 
\]
Note that the BTL model fits within this framework using $\ell(x) = \ln(1 + e^x)$, i.e., \emph{binary cross-entropy loss}.\footnote{Typically, BTL is presented as choosing $\theta$ maximizing the likelihood of seeing the pairwise comparisons we observed, assuming that $\Pr[a \succ b] = \frac{e^{r_\theta(a)}}{e^{r_\theta(a)} + e^{r_\theta(b)}}$. That is, we choose $\theta$ maximizing $\prod_{a \ne b} \left( \frac{e^{r_\theta(a)}}{e^{r_\theta(a)} + e^{r_\theta(b)}}\right)^{n_{a \succ b}}$. By taking the log and swapping the sign, we see that this is equivalent to minimizing $\sum_{a \ne b} \log\left(1 + e^{r_\theta(b) - r_\theta(a)}\right) $.} One caveat to this approach, however, is that an optimal $\theta$ need not be well-defined: it is possible that no minimum is attained. Fortunately, since we only care about rankings induced by optimal parameter vectors, we can conveniently remedy this by saying the output is any ranking that is induced by parameter vectors that are arbitrarily close to optimal. More formally, we say that a linear rank aggregation rule $f$ \emph{minimizes $\ell$} if for all $\sigma = f(\bsigma)$,
\begin{equation}\label{eq:inf}\inf_{\theta: \theta \in R^\sigma}\mathcal{L}(\theta; \bsigma, \ell) = \inf_{\theta}\mathcal{L}(\theta; \bsigma, \ell).
\end{equation}
Even if no minimum is attained, there is always a choice of feasible ranking $\sigma$ such that \Cref{eq:inf} is satisfied.

With this definition in hand, we proceed to our first main result, which spells rather bad news for this class of rules: \emph{Any} loss-based aggregation rule using a nondecreasing and convex loss function (of which BTL is one, and hinge loss is another) will fail our two core axioms, PMC and PO. This paints a negative picture for current RLHF methods with respect to their social choice guarantees. Note that we will exclude the discussion of loss functions with a global minimum at zero, like ReLU, because the loss minimizer will be zero, making all rankings vacuous consequently. And we have focused on convex loss functions due to their practical optimization ease.

\begin{theorem}\label{thm:loss-based-PO-PMC}
    If a linear rank aggregation rule $f$ optimizes a loss function $\ell$ that satisfies $\inf_x \ell(x) < \ell(0)$ and is either nondecreasing and weakly convex, or strictly convex (and possibly nonmonotone), then $f$ fails PMC and PO.
\end{theorem}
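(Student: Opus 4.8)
The plan is to refute each axiom by exhibiting explicit instances\emdash a dimension $d$, feature vectors $\{\x_c\}$, and a profile $\bsigma$\emdash on which \emph{every} rule minimizing an admissible $\ell$ is forced to output a ranking that violates the axiom. Since the conclusion must hold for the whole class of loss functions at once, I would run the entire argument using only the three structural hypotheses ($\ell \ge 0$, $\inf_x \ell(x) < \ell(0)$, and either nondecreasing and weakly convex, or strictly convex), never the closed form of a particular $\ell$; and I would phrase every comparison in terms of infima over the regions $R^\sigma$, as in the definition of ``minimizes $\ell$'', so that optima approached as $\|\theta\| \to \infty$ are handled uniformly.

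The first guiding observation is that the geometry must be rich enough for the \emph{coupling} between pairs to bite. In $d = 1$ each candidate is a scalar and only two rankings are feasible (the feature order and its reverse), and a short computation comparing $\inf_{\theta > 0}\mathcal{L}$ with $\inf_{\theta < 0}\mathcal{L}$ (using that $\ell$ is nondecreasing) shows the majority order is always weakly loss-optimal, so no violation can be forced. The mechanism that breaks the axioms appears only with non-collinear features: with three candidates in general position in $\mathbb{R}^2$, the map $\theta \mapsto (r_\theta(a) - r_\theta(b),\, r_\theta(b) - r_\theta(c))$ is a linear isomorphism of $\mathbb{R}^2$, so the loss-based \emph{linear} rule coincides with its unrestricted counterpart, which is cardinal (score/Borda-like) and so can override majorities. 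This already suggests where to look for a PMC counterexample, and it also warns that PO will be fundamentally harder, since unrestricted loss rules \emph{satisfy} PO.

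For PMC I would use such a three-candidate instance with two razor-thin majorities for $a$ (over both $b$ and $c$) together with a landslide $b \succ c$, giving the feasible PMC ranking $a \succ b \succ c$. Because the gap $r_\theta(a) - r_\theta(c)$ equals $(r_\theta(a) - r_\theta(b)) + (r_\theta(b) - r_\theta(c))$, satisfying $b$'s landslide inflates the $a$-versus-$c$ gap past its modest target, and the optimizer compensates by driving $r_\theta(a) - r_\theta(b)$ \emph{negative}, demoting $a$ below $b$ and outputting $b \succ a \succ c \ne \sigma^\star$. I would certify this for both loss families separately: for strictly convex (possibly non-monotone) $\ell$ via a stationarity argument at the interior optimum, checking that the designated gap flips sign once the electorate is large enough; and for nondecreasing convex $\ell$ via a limiting/Borda-type argument, using $\inf_x \ell(x) < \ell(0)$ to pin down the direction the infimum is approached. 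The convexity is what lets me turn these into clean subgradient/Jensen inequalities that hold across the class.

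The main obstacle is PO. As noted, unrestricted loss rules are PO, so a PO violation must exploit the genuine linear restriction, which requires $m > d + 1$ (e.g.\ $d = 2$, $m = 4$) so that some rankings become infeasible and the optimizer is pushed onto a feasible ranking that reverses a unanimous pair. Constructing such an instance\emdash placing four feature vectors so that correctly ordering the unanimous pair is only achievable by paying a disproportionate, coupled penalty on the contested pairs\emdash and then proving the reversal \emph{uniformly} over the entire loss class is the delicate step: I expect to need a separate quantitative estimate for the monotone case (where the relevant infimum is attained only as $\|\theta\| \to \infty$ and must be analyzed as a recession direction) and for the strictly convex case (where the optimum is interior and I would locate it via first-order conditions). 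Verifying that all the regions $R^\sigma$ invoked are feasible (have nonempty interior) for the chosen features, and that each constructed majority relation is as claimed, are the routine but fiddly remaining checks.
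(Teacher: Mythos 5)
Your high-level diagnosis is right\emdash the violations must exploit the cardinal, coupled nature of loss minimization, PO is the harder axiom and needs the feature geometry to bite, and everything must be argued uniformly over the loss class via one-sided derivatives and infima over regions $R^\sigma$. But the proposal stops short of an actual construction at exactly the point where the paper's proof does its work. The paper builds a \emph{single} six-candidate instance in $\mathbb{R}^2$: a core triple $a,b,c$ at $(2,1),(1,1),(0,0)$ with a $p$-vs-$(1-p)$ split between $a\succ b\succ c$ and its reverse, plus near-duplicates $a',b',c'$ displaced by $O(\varepsilon)$, with $c'$ unanimously preferred to $c$. Two lemmas pin down\emdash via left/right derivatives of $g(x)=p\ell(-x)+(1-p)\ell(x)$ and a choice of $p$ strictly between two subderivative ratios\emdash a cone ($\theta_1>A_3>0$, $\theta_2<A_4$) containing all optimizers of the core instance; the displacement of $c'$ is then chosen so that every $\theta$ in this cone ranks $c$ above $c'$, and Berge's maximum theorem transfers this from $\varepsilon=0$ to small $\varepsilon>0$. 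This one instance kills PO and PMC simultaneously. None of these ingredients\emdash the duplicate-candidate trick, the explicit localization of the optimizer set, or the compactness/upper-semicontinuity step needed because infima may not be attained\emdash appears in your proposal; for PO you explicitly defer ``the delicate step.''

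Moreover, your separate PMC construction is likely not salvageable as stated. Take hinge loss $\ell(x)=\max(0,1+x)$ (admissible: nondecreasing, weakly convex, $\inf_x\ell(x)=0<1=\ell(0)$) and your profile with thin majorities for $a$ and a landslide $b\succ c$: a direct subgradient check shows the unconstrained optimum sits exactly on the tie $r_a=r_b$ (e.g., with margins $0.51/0.49$ the minimizer is $(r_a-r_b,\,r_b-r_c)=(0,1)$), so both $R^{a\succ b\succ c}$ and $R^{b\succ a\succ c}$ attain the infimum and a rule minimizing $\ell$ may legitimately tie-break to the PMC ranking. Your mechanism does force the Borda-like reversal for BTL, but the theorem quantifies over \emph{all} admissible losses, and piecewise-linear ones park the optimum on the boundary rather than strictly crossing it. The paper sidesteps this by arranging the violated pair ($c$ vs.\ $c'$) so that the optimizers lie strictly inside the wrong region, with a quantitative margin that survives the $\varepsilon$-perturbation. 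You would also need to handle the branch where $\ell(x)<\ell(0)$ for some $x>0$ (the paper dispatches it with a trivial two-candidate, one-voter instance), which your case analysis omits.
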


\begin{proof}
    
    Fix a loss function $\ell$ satisfying the theorem conditions. Note that since $\ell$ is convex, we may also assume it is continuous~\cite[Corollary 10.1.1]{rockafellar-1970}. Furthermore, since $\inf_x \ell(x) < \ell(0)$, we know that there exists $x \ne 0$ such that $\ell(x) < \ell(0)$. The case where $x > 0$ is relatively simple (as such loss functions lead to unnatural behavior), and we handle it at the end of the proof. For now, we assume that there exists $x < 0$ such that $\ell(x) < \ell(0)$. Note that this also implies that for all $y \ge 0$, $\ell$ is lower bounded by the affine linear function connecting $(x, \ell(x))$ and $(0, \ell(0))$, and thus, $\lim_{x \to \infty} \ell(x) = \infty$.

    We begin with a small instance of just three candidates $C^{core} = \set{a, b, c}$ to gain some traction on how $\ell$ behaves. We will later extend this instance with additional candidates to demonstrate a profile where PO and PMC fail. The candidates will have feature vectors $\x_a := (2, 1)$, $\x_b := (1, 1)$, and $\x_c := (0, 0)$, respectively. Furthermore, a $p$-fraction of voters (for $p$ to be chosen later) will rank $a \succ b \succ c$, while the remaining $(1 - p)$-fraction will have inverted preferences, ranking $c \succ b \succ a$.\footnote{It so happens that these rankings are feasible, but for now we will not worry about this as loss function-based rules still make sense regardless of whether the inputs are feasible. For the final example, we will ensure that the rankings are feasible.}
    
    Let $$\mathcal{L}^{core}(\theta) := \sum_{x \ne y \in C^{core}} w_{x \succ y} \ell(r_\theta(y) - r_\theta(x))$$
    with $w_{x \succ y} \in \set{1 - p, p}$ be the loss function on this instance (scaling $n_{x \succ y}$ down to $w_{x \succ y}$ leads to an equivalent formulation). Let $g(x) = p \cdot \ell(-x) + (1 - p)\ell(x)$. Note that we can rewrite $\mathcal{L}^{com}$ as
    \[\mathcal{L}^{core}(\theta) = 
        g(r_\theta(a) - r_\theta(b)) + g(r_\theta(a) - r_\theta(c)) + g(r_\theta(b) - r_\theta(c)).
    \]
    Note that $r_\theta(c) = 0$ for all $\theta$, so we can simplify this to
    \[
        \mathcal{L}^{core}(\theta) = g(r_\theta(a) - r_\theta(b)) + g(r_\theta(a)) + g(r_\theta(b)).
    \]
    We will consider an unconstrained version of this problem where we are free to choose rewards $r_a, r_b \in \mathbb{R}$ arbitrarily, and later show by which vectors $\theta$ these optimal values can be induced. That is, we will first find $r_a, r_b \in \mathbb{R}$ minimizing
    \[
         \mathcal{L}^{unconstr}(r_a, r_b) := g(r_a - r_b) + g(r_a) + g(r_b).
    \]
    Let $OPT^{core} = \set{\theta \mid \mathcal{L}^{core}(\theta) = \inf_{\theta'} \mathcal{L}^{core}(\theta')}$ and $OPT^{unconstr}\set{(r_a, r_b) \mid \mathcal{L}^{unconstr}(r_a, r_b)= \inf_{r'_a, r'_b} \mathcal{L}^{unconstr}(r'_a, r'_b)}$ be the set of minimizers for these two loss functions. In \Cref{app:proofs}, we establish the following results about these optimal sets. 
    \begin{lemma}\label{lem:unconstr}
    	There exists a rational $p \in (1/2, 1]$ and values $A_1 < A_2$ with $A_2 > 0$ such that $OPT^{unconstr}$ is nonempty and for all $(r_a, r_b) \in OPT^{unconstr}$, $r_a > A_2$ and $r_b \le  A_1$. 
    \end{lemma}
    \begin{lemma}\label{lem:core}
    	Suppose \Cref{lem:unconstr} holds for values $p, A_1$ and $A_2$, then, for this same choice of $p$, $OPT^{core}$ is nonempty and there exist $A_3$ and $A_4$ with $A_3 > 0$ such that for all $(\theta_1, \theta_2) \in OPT^{core}$, $\theta_1 > A_3$ and $\theta_2 < A_4$. 
    \end{lemma}

 We will now explicitly construct a family of instances with candidate feature vectors parameterized by a value $\varepsilon \in \mathbb{R}$ such that for sufficiently small $\varepsilon > 0$, the output of $f$ fails the two axioms. Fix $p, A_3$ and $A_4$ from \Cref{lem:core}, and choose $\delta$ with $0 < \delta < 1$ such that $\delta A_4 - A_3 < 0$ ($\delta < A_3 /A_4$ works if $A_4 > 0$, and otherwise, any $0 < \delta < 1$ will do).
 
 Each instance will have six candidates, which we will think of as two groups of three, $C = C^{core} \cup C^{copies}$. The first group $C^{core} = \{a, b, c\}$ will be the same as the three-candidate instance from above, while the second group $C^{copies} = \{a', b', c'\}$ will be new. The candidates $a, b, c$ will still be located at $\x_a := (2, 1)$, $\x_b := (1, 1)$, and $\x_c := (0, 0)$, respectively. The candidates $a'$, $b'$, $c'$ will be located near their undecorated counterparts at $\x_{a'} := \x_a + (-\varepsilon, 0)$, $\x_{b'} := \x_b + (-\varepsilon, 0)$ and $\x_{c'} := \x_c + (- \varepsilon, \delta \cdot \varepsilon)$. 
    
    Next, we describe the voter preferences. A $p$-fraction of voters will have the ranking $a \succ a' \succ b \succ b' \succ c' \succ c$, and the remaining $(1-p)$-fraction of voters will have ranking $c' \succ c \succ b' \succ b \succ a' \succ a$. As long as $0 < \varepsilon < 1$ (which will be the case for our final chosen $\varepsilon$), these are both feasible rankings. The former is induced by the nondegenerate feature vector $(1, 1)$\footnote{This induces rewards $3, 3 - \varepsilon, 2, 2 - \varepsilon, (1 - \delta) \cdot \varepsilon$ and $0$ for $a, a', b, b', c',$ and $c$, respectively.} and the latter by $(-1, 0)$.\footnote{This induces rewards $\varepsilon, 0, -1 + \varepsilon, -1, -2 + \varepsilon$ and $-2$. for $c', c, b', b, a'$, and $a$, respectively. }

    For each $\varepsilon \in \mathbb{R}$, let
    \[
        \mathcal{L}^\varepsilon(\theta) = \sum_{x \ne y \in C}  w_{x \succ y} \ell(r_{\theta}(y) - r_\theta(x))
    \]
    with $w_{x \succ y} \in \set{0, 1-p, p, 1}$ be the loss function we are optimizing using candidate locations parameterized by $\varepsilon$.

    We will show that for sufficiently small $\varepsilon > 0$, $\inf_{\theta \in R^{c' \succ c}} \mathcal{L}^\varepsilon(\theta) > \inf_\theta \mathcal{L}^\varepsilon(\theta)$. This means that $f$ must output a ranking with $c \succ c'$. Observe that this is a PO violation because all voters agree that $c' \succ c$. Furthermore, this is a PMC violation because a majority of voters have the ranking $a \succ a' \succ b \succ b' \succ c' \succ c$, yet this is not the output.

Let $OPT(\varepsilon)$ be the set of vectors optimizing $\mathcal{L}^\varepsilon$. The rest of the proof will follow from the following two lemmas, whose proofs are in \Cref{app:proofs}.
\begin{lemma}\label{lem:opt0}
	$OPT(0) \subseteq \overline{R^{c' \succ c}}$.
\end{lemma}
\begin{lemma}\label{lem:continuity}
	Suppose $OPT(0) \subseteq \overline{R^{c' \succ c}}$, then, for sufficiently small $\varepsilon > 0$, $\inf_{\theta: \theta \in R^{c' \succ c}}\mathcal{L}^\varepsilon(\theta) > \inf_{\theta} \mathcal{L}^\varepsilon(\theta)$.	
\end{lemma}

Finally, we handle the case that exists $x > 0$ such that $\ell(x) < \ell(0)$. Note that by convexity, this implies that for all $y < 0$, $\ell(y) > \ell(0) > \ell(x)$, so $\inf_{y \le 0} \ell(y) < \inf_y \ell(y)$. Now, consider an instance with two candidates $\set{a, b}$ located at $\x_a = (1, 0)$ and $\x_b = (0, 1)$, and a single voter ranking $a \succ b$ (feasible via the parameter vector $(1, 0)$). It is possible to achieve a loss of $\ell(x)$, e.g., by outputting the parameter vector $(0, x)$. On the other hand, any $\theta$ inducing the ranking $a \succ b$ will be lower bounded by $\ell(0) > \ell(x)$ from above. Hence, $f$ must output $b \succ a$, which is both a PO and PMC violation.
\end{proof}

\subsection{Majority-Based Loss Formulation}
\label{subsec:majority}

Despite the negative results for loss-function-based rules, we may hope for a remedy using slightly different information. Specifically, we consider a similar loss-based function that rather than getting penalized for disagreeing with each voter only gets penalized if it disagrees with a majority of voters. That is, we choose $\theta$ minimizing
\begin{equation*}
    \mathcal{L}^{maj}(\theta ; \bsigma, \ell) = \sum_{a \ne b \in C} \mathbb{I}[w_{a \succ b}(\bsigma) > 1/2] \cdot \ell(r_{\theta(b)} - r_{\theta(a)}).
\end{equation*}
Defining a parameter aggregation function based on this loss suffers from the same caveat as before, that in some cases no optimal $\theta$ exists. Nevertheless, we can apply an analogous fix for a ranking variant. We say that a linear rank aggregation rule $f$ \emph{minimizes $\ell$ in the majority formulation} if for all $\sigma = f(\bsigma)$,
\[
    \inf_{\theta: \theta \in R^\sigma}\mathcal{L}^{maj}(\theta; \bsigma, \ell) = \inf_{\theta}\mathcal{L}^{maj}(\theta; \bsigma, \ell).
\]
We first show (in \Cref{app:loss-based-pmc}) that this does indeed help achieve PMC with essentially all loss functions.

\begin{theorem}\label{thm:loss-based-pmc}
    Fix a nondecreasing loss function $\ell$ with $\ell(0) > \inf_x \ell(x)$.
    If a linear rank aggregation rule $f$ minimizes $\ell$ in the majority formulation, then $f$ satisfies PMC.
\end{theorem}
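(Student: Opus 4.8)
The plan is to show that when a feasible PMC ranking $\sigma^*$ exists for the profile $\bsigma$, any $\sigma$ minimizing $\ell$ in the majority formulation must equal $\sigma^*$. The key observation is that the majority-based loss $\mathcal{L}^{maj}(\theta; \bsigma, \ell)$ only depends on the majority relation: each ordered pair $(a,b)$ contributes $\ell(r_\theta(b) - r_\theta(a))$ precisely when $a$ is majority-preferred to $b$. Since $\sigma^*$ is a PMC ranking, the majority relation is exactly the strict order induced by $\sigma^*$ (it is total and antisymmetric on all pairs because no majority tie can occur when a PMC ranking exists), so the loss becomes $\mathcal{L}^{maj}(\theta) = \sum_{a \succ_{\sigma^*} b} \ell(r_\theta(b) - r_\theta(a))$, a sum over exactly the pairs ordered by $\sigma^*$.

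First I would establish the lower bound $\mathcal{L}^{maj}(\theta) \ge \binom{m}{2}\,\ell_{\inf}$ for every $\theta$, where $\ell_{\inf} := \inf_x \ell(x)$, since each of the $\binom{m}{2}$ summands is at least $\ell_{\inf}$. Next I would show this bound is approached arbitrarily closely by parameter vectors that induce $\sigma^*$. Because $\ell$ is nondecreasing and $\ell(0) > \ell_{\inf}$, driving each argument $r_\theta(b) - r_\theta(a)$ (for $a \succ_{\sigma^*} b$) toward $-\infty$ drives each term toward $\ell_{\inf}$. Since $\sigma^*$ is feasible, there is a nondegenerate $\theta_0 \in R^{\sigma^*}$ with $r_{\theta_0}(a) > r_{\theta_0}(b)$ strictly for every such pair; scaling $\theta_0$ by a factor $t \to \infty$ sends every difference to $-\infty$ while keeping $t\theta_0 \in R^{\sigma^*}$. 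Hence $\inf_{\theta \in R^{\sigma^*}} \mathcal{L}^{maj}(\theta) = \binom{m}{2}\,\ell_{\inf} = \inf_\theta \mathcal{L}^{maj}(\theta)$, so $\sigma^*$ itself satisfies the defining infimum condition.

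The crux, and the step I expect to be the main obstacle, is ruling out any other feasible ranking $\sigma \ne \sigma^*$ from also achieving this infimum. If $\sigma \ne \sigma^*$, there is some pair with $a \succ_{\sigma^*} b$ but $b \succ_\sigma a$ (or $a$ tied), so every $\theta \in R^\sigma$ has $r_\theta(b) \ge r_\theta(a)$, forcing the argument $r_\theta(b) - r_\theta(a) \ge 0$ and thus that summand to be at least $\ell(0) > \ell_{\inf}$ by monotonicity. The difficulty is that this gives only one ``bad'' term while the other $\binom{m}{2} - 1$ terms might simultaneously approach $\ell_{\inf}$; I must verify the remaining terms cannot compensate to bring the total infimum down to $\binom{m}{2}\,\ell_{\inf}$. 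This follows because each term is independently lower bounded by $\ell_{\inf}$, so $\inf_{\theta \in R^\sigma} \mathcal{L}^{maj}(\theta) \ge (\binom{m}{2}-1)\ell_{\inf} + \ell(0) > \binom{m}{2}\,\ell_{\inf}$, which strictly exceeds the global infimum. Therefore no $\sigma \ne \sigma^*$ can satisfy the minimization condition, forcing $f(\bsigma) = \sigma^*$ and establishing PMC.

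I would be careful about two technical points: confirming that when a PMC ranking exists every pairwise majority is strict (so there are no indicator-zero terms and the reduction to a clean sum over $\sigma^*$ pairs is exact), and noting that feasibility of $\sigma^*$ is exactly what the hypothesis of PMC supplies, so the strict separation needed for the scaling argument is guaranteed. The argument uses only that $\ell$ is nondecreasing with $\ell(0) > \inf_x \ell(x)$, with no convexity required, matching the theorem's hypotheses.
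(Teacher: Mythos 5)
Your proof is correct and takes essentially the same route as the paper's: scale a nondegenerate parameter vector inducing the PMC ranking so that every summand tends to $\inf_x \ell(x)$, and note that any other feasible ranking forces at least one disagreeing pair whose summand is at least $\ell(0)$, which strictly exceeds the global infimum since the remaining terms are each bounded below by $\inf_x \ell(x)$. The only cosmetic difference is that the paper normalizes $\inf_x \ell(x) = 0$ at the outset (using the standing nonnegativity assumption on $\ell$), whereas you carry $\ell_{\inf}$ through explicitly.
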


Note that if the $\ell(0) > \inf_x \ell(x)$ condition is not satisfied, i.e., $\ell(0) = \inf_x \ell(x)$, then \emph{all} linear rank aggregation rules $f$ minimize $\ell$ in the majority formulation, so satisfying this is a vacuous condition. Indeed, the parameter vector $\mathbf{0}$ of all $0$s achieves optimal loss of $\ell(0)$ for each pair and is consistent with every ranking $\sigma$. Therefore, the condition $\ell(0) > \inf_x \ell(x)$ is as innocuous as possible to rule out these edge cases.

However, despite this good news for PMC, we show that this does not help in achieving PO. In fact, our negative result extends to every $C1$ linear rank aggregation rule. Note that if $f$ minimizing $\ell$ in the majority formulation breaks ties consistently (i.e., if multiple feasible rankings are optimal, then it consistently chooses the same one), then it is C1.  We then have the following result, whose proof is relegated to \Cref{app:C1-PO}.

\begin{theorem} \label{thm:C1-PO}
    All $C1$ linear rank aggregation rules fail PO. 
\end{theorem}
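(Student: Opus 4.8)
The plan is to exploit the defining feature of a $C1$ rule: its output depends only on the majority tournament, so we may write $f(\bsigma) = F(\tau(\bsigma))$, where $\tau(\bsigma)$ records for each pair which candidate the majority prefers and $F$ maps each tournament to a feasible ranking. The key observation is that PO imposes a demand on $F$ that ranges over many profiles at once: if a pair $a \succ b$ is unanimous in \emph{some} profile $\bsigma$ with $\tau(\bsigma) = T$, then PO forces $a \succ_{F(T)} b$. Hence, to refute every $C1$ rule it suffices to exhibit one tournament $T$, realized by several profiles, such that the set $Q$ of all pairs that are unanimous in \emph{some} realizing profile cannot be extended to a feasible ranking; then $F(T)$, being a single feasible ranking, must violate one of these unanimities. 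Recall that a set of directed pairs is feasible iff there is a $\theta$ with $\langle\theta, \x_a - \x_b\rangle > 0$ for each, which fails exactly when the difference vectors admit a positive dependency $\sum \lambda (\x_a - \x_b) = 0$ (a separating-hyperplane / Gordan argument).

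First I would isolate why the naive attempt fails, since this dictates the construction. The tempting choice is three pairs whose differences sum to zero (equivalently, one candidate at the centroid of three others). This fails for an instructive reason: if a profile makes one such pair unanimous, every voter has $\langle\theta, \x_a - \x_b\rangle > 0$, which by the dependency forces the other two inner products to have negative sum, so no voter—hence no majority—can agree with \emph{both} remaining pairs. The very dependency that yields infeasibility also prevents the required majorities from lining up, so the three pairs can never all lie in $Q$ for a common $T$. A short counting argument shows this obstruction is unavoidable in $\mathbb{R}^2$: one cannot place four vectors so that all four are positively dependent yet no three are. This forces the construction into $d \ge 3$, using a four-term dependency in which removing any one pair leaves the other three unconstrained.

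The construction I would use lives in $\mathbb{R}^{4}$ with four pairs $(a_k, b_k)$. In a $3$-dimensional subspace $U$ I place the difference vectors $d_k = \x_{a_k} - \x_{b_k}$ as the vertex directions of a regular tetrahedron, so that $\sum_k d_k = 0$ while every three of them are linearly independent; thus $\{a_k \succ b_k\}_{k=1}^4$ is infeasible but no proper subset is problematic. Along a fourth direction $e \perp U$ I give pair $k$ a large offset $M_k e$ with $M_1 \gg M_2 \gg M_3 \gg M_4$, common to $a_k$ and $b_k$, and I restrict attention to parameter vectors $\theta$ with a large positive $e$-component. This leaves each intra-pair comparison governed solely by $\langle\theta, d_k\rangle$ (the offset cancels), while making every cross-pair comparison dominated by the $M_k$-gaps, hence \emph{fixed and unanimous} for all such $\theta$; this device neutralizes the cross comparisons, which are the part that would otherwise make matching tournaments delicate.

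With this in hand I build four profiles inducing an identical tournament $T$. Profile $k$ uses three equally sized groups of voters, all with $\langle\theta, d_k\rangle > 0$ (so $a_k \succ b_k$ is unanimous), and with the three remaining signs arranged so that each other pair is flipped in exactly one group; since any three $d_j$ are independent these sign patterns are realizable, and each other pair is then preferred by a $2/3$ majority. Consequently every profile induces the same $T$: each $a_k \succ_T b_k$, with the cross comparisons equal to the fixed unanimous order. But pair $k$ is unanimous in profile $k$, so $Q \supseteq \{a_k \succ b_k\}_{k=1}^4$, which is infeasible because $\sum_k d_k = 0$; hence $F(T)$ must reverse some $a_k \succ b_k$, contradicting PO on profile $k$. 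The hard part is precisely this profile-matching step—the positive dependency needed for infeasibility is exactly what blocks the required majorities, which is why both passing to $d \ge 3$ with a four-term dependency and introducing the dominating offset $e$ are essential; the remaining bookkeeping (realizability of the sign patterns, strictness of all majorities to avoid exact ties, and nondegeneracy of each $\theta$) is routine.
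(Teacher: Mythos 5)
Your proof is correct, and while it shares the paper's proof skeleton\,---\,a $C1$ rule sees only the majority tournament, so it suffices to exhibit one tournament $T$ with several realizing profiles such that the pairs that are unanimous in \emph{some} realizing profile cannot all appear in a single feasible ranking\,---\,your infeasibility gadget is genuinely different. The paper places eight candidates at $\pm e_i$ in $\mathbb{R}^4$ and one candidate $c^*$ at $(1/5,\dots,1/5)$; the geometric obstruction is that $r_\theta(c^*) < \max_i |\theta_i|$, so no linear reward can rank $c^*$ first, and the argument then runs a case analysis over which candidate ends up above $c^*$, supplying for each case an explicit five-voter profile (consistent with the one tournament) in which that candidate is Pareto-dominated by $c^*$. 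Your obstruction is instead a positive linear dependency $\sum_{k=1}^4 d_k = 0$ among four intra-pair difference vectors, which makes $\{a_k \succ b_k\}_{k=1}^4$ infeasible while any three remain unconstrained; the four realizing profiles are then symmetric (profile $k$ makes pair $k$ unanimous and gives the others $2/3$ majorities), which replaces the paper's table of ad hoc profiles with one uniform argument. The price you pay is the extra machinery needed to keep the cross-pair comparisons from interfering (the dominating offsets $M_k e$), whereas the paper gets the analogous control for free from the ranking template in its equation for the voter orderings. Your diagnosis of why a three-term dependency cannot work\,---\,the same dependency that certifies infeasibility forces every voter to disagree with at least one of the remaining two pairs, so the two required majorities would need fractions summing to more than one\,---\,is correct and is exactly the obstacle that pushes both constructions into $d \ge 3$ for the ``interesting'' part of the geometry. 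All the details you defer (realizability of each group's sign pattern via linear independence of any three $d_k$, strictness and nondegeneracy, and that a feasible ranking containing all four pairs would force $\sum_k \langle\theta, d_k\rangle > 0 = \langle\theta, \sum_k d_k\rangle$) do check out.
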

This result is quite unfortunate, because if there were a rule that is both $C1$ and PO, we would automatically achieve PMC: Whenever there is a feasible PMC ranking, a $C1$ rule cannot distinguish between this profile and a profile where all voters submit this ranking, hence, under the PO criterion, it must output it. Furthermore, whenever there is a PMC ranking, outputting it is necessarily PO, as for every pair, a majority of voters agree with the PMC ranking. Interestingly, in the proof, we construct a profile which has a PMC ranking, yet it is not feasible, and no matter how a $C1$ linear rank aggregation rule breaks ties, there is an underlying profile in which this output violates PO.

\section{Social Choice-Based Rule}
\label{sec:social}

In light of the above negative results, in this section, we ask whether there are linear rank aggregation rules that concurrently satisfy our two core axioms, PO and PMC. We answer this question affirmatively by presenting a new method based on a prominent rule from voting theory.

The \emph{Copeland rule} assigns a \emph{Copeland score} to each alternative equal to the number of other alternatives it beats in a pairwise competition, i.e., the score for $a$ is $|\set{b \mid w_{a \succ b} > 1/2}|$. It then ranks the candidates in descending order according to their Copeland scores (breaking ties arbitrarily).  It is known that Copeland satisfies PO, PMC, and additional axiomatic properties.  However, in linear social choice, since not every ranking is feasible,  we cannot always output the Copeland ranking.

We, therefore, define a new linear rank aggregation rule, which we call {\em leximax Copeland}. This rule chooses a feasible ranking as follows.
It ranks first the candidate with the highest Copeland score that can be feasibly ranked first under some parameter vector $\theta$. Subject to this first position, it ranks second the candidate with the highest Copeland score which can be feasibly ranked second, and continues this process for subsequent positions.

Copeland's rule is a $C1$ rule because it only requires the majority relationships between the candidates. Analogously, leximax Copeland is also a $C1$ linear rank aggregation rule. Therefore, by \Cref{thm:C1-PO}, it does not satisfy the PO criterion. To address this issue,  we define a variant called {\em leximax Copeland subject to PO (LCPO)}, 
which incorporates the PO criterion. Under LCPO, for every pair of alternatives where one dominates the other, the rule restricts rankings to place the dominating alternative above the dominated one.

The rule remains well-defined since the set of feasible rankings when enforcing the PO criterion is non-empty, as whenever  $a$  dominates $b$, all the rankings in the input profile rank $a$ above $b$. Note that if the Copeland ranking is feasible, then this rule outputs that ranking, since unrestricted Copeland satisfies PO.

In addition to PO and PMC, we wish to show that LCPO satisfies two additional properties, which we define presently. 

\begin{definition}[majority consistency]\label{def:maj_consistency}
A linear rank aggregation rule satisfies \emph{majority consistency} if when a candidate \(a\) is ranked first by a majority of voters in the input profile,
 \(a\) is ranked first in the output ranking. 
\end{definition}

Majority consistency ensures that the collective decision reflects the preference of the majority when there is a clear favorite.  This principle aligns with PMC, but specifically focuses on the majority's favorite alternative.  However, as we discussed above, a PMC ranking does not necessarily exist, and even when it exists, it is not necessarily feasible. By contrast, when a majority winner exists, this candidate is necessarily ranked first by a majority of voters in the input profile, who themselves (by assumption) submit feasible rankings. Therefore, we need not handle the case where it is impossible to rank the majority winner first.
 
\begin{definition}[winner monotonicity]\label{def:winner_monotonicity}
A linear rank aggregation rule satisfies \emph{winner monotonicity} if, when a candidate \(a\) is ranked first in the output ranking, elevating \(a\) in any voter's preference does not cause \(a\) to lose their top position in the updated aggregate ranking.
\end{definition}
Winner monotonicity ensures that improving a leading candidate's position among individual voters will not result in that candidate's demotion.

We now state and prove the main result of this section. 

\begin{theorem}
    LCPO satisfies PO, PMC,  majority consistency and winner monotonicity. 
\end{theorem}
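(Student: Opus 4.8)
The plan is to prove the four properties one at a time, exploiting the sequential greedy structure of LCPO together with the fact that it is a $C1$ rule restricted to the PO-feasible rankings. Let me fix notation: LCPO enforces, for every dominated pair $a \succ b$ (meaning $w_{a \succ b} = 1$), that $a$ is placed above $b$, and among all such PO-respecting feasible rankings it greedily fills positions $1, 2, \ldots, m$ by always choosing the candidate of highest Copeland score that can still be feasibly placed in the current slot.

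\textbf{PO.} This is essentially true by construction, and I would record it first as the baseline. Since the rule only ever considers rankings in which every dominated pair $a \succ b$ has $a$ above $b$, the output automatically places the Pareto-dominating candidate above the dominated one. The only thing to check carefully is that the feasible-and-PO-respecting set is nonempty so the rule is well-defined; the excerpt already argues this, since every input ranking $\sigma_i$ respects all domination relations and is feasible, so at least one such ranking exists.

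\textbf{PMC.} Here I would use the same argument already sketched in the excerpt for $C1$ rules. Suppose a feasible PMC ranking $\sigma$ exists. Being $C1$, LCPO cannot distinguish $\bsigma$ from the unanimous profile in which every voter reports $\sigma$, since the majority (indeed pairwise) relations are identical. I want to argue that on the unanimous profile LCPO outputs $\sigma$. In the unanimous profile, the Copeland scores are determined by $\sigma$ itself, so the candidate ranked $k$-th in $\sigma$ has the $k$-th largest Copeland score (strictly decreasing down the ranking, as every pairwise majority is unanimous and hence strict). The greedy leximax procedure, restricted to feasible PO-respecting rankings, will then reconstruct $\sigma$ position by position: the top-Copeland candidate is feasibly placeable first (as $\sigma$ itself witnesses feasibility), the second-highest second, and so on. The one subtlety is confirming that at each step the greedy choice coincides with $\sigma$'s choice; I would argue inductively that once positions $1, \ldots, k-1$ agree with $\sigma$, the unique highest-Copeland candidate still available is exactly $\sigma$'s $k$-th candidate, and it is feasibly placeable because $\sigma$ is a feasible completion.

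\textbf{Majority consistency.} Suppose candidate $a$ is ranked first by a strict majority of voters. Then $a$ beats every other candidate pairwise, giving $a$ the maximum possible Copeland score $m-1$, strictly larger than any other candidate's. So in the first step of the greedy procedure, $a$ is the unique highest-Copeland candidate; I just need $a$ to be feasibly placeable first. This holds because the majority of voters who rank $a$ first submit feasible rankings with $a$ on top, witnessing that $a$ can be feasibly first (and placing $a$ first respects PO trivially, since $a$ dominates nothing it must sit below). Hence LCPO ranks $a$ first.

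\textbf{Winner monotonicity.} This I expect to be the main obstacle, since it involves comparing outputs across two profiles rather than reasoning within one. Let $a$ be first in $f(\bsigma)$, and let $\bsigma'$ be obtained by elevating $a$ in some voters' rankings. I would track how this perturbation affects (i) Copeland scores and (ii) the feasible-placement constraints. Elevating $a$ can only increase the set of candidates $a$ beats pairwise, so $a$'s Copeland score does not decrease; meanwhile, for any other candidate $b$, the only pairwise comparison that can change in $b$'s disfavor is $b$ versus $a$, so no $b \neq a$ gains Copeland score relative to $a$. Thus $a$ retains (weakly) the top Copeland score. The delicate point is feasibility: I must verify that $a$ is still feasibly placeable first under $\bsigma'$ and that the PO constraints have not shifted to forbid it. Elevating $a$ cannot create a domination relation forcing some other candidate above $a$ (it can only add or preserve domination relations pointing away from or towards $a$ in $a$'s favor), and the feasible ranking that placed $a$ first for $\bsigma$ — restricted through the lens of feature geometry — remains available since feasibility of a ranking depends only on the feature vectors $\x_c$, which are unchanged. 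Combining the unchanged top Copeland score with continued feasibility of an $a$-first ranking, the greedy rule again selects $a$ in position one. The argument I would scrutinize most is ruling out that the change in majority relations elsewhere reshuffles which rankings are PO-admissible in a way that blocks $a$ from the top; I would handle this by noting PO constraints involving $a$ only move in $a$'s favor, and constraints not involving $a$ cannot affect whether $a$ can occupy position one.
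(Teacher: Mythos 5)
Your treatment of PO and majority consistency matches the paper's and is correct. The other two parts each contain a flaw.

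For PMC, you invoke the $C1$ property to replace $\bsigma$ by the unanimous profile in which every voter reports the PMC ranking $\sigma$. But LCPO is \emph{not} a $C1$ rule: the PO constraints it enforces depend on the domination relations $w_{a \succ b} = 1$, which are not determined by the majority relations alone (indeed, by \Cref{thm:C1-PO} no $C1$ rule satisfies PO, so LCPO cannot be $C1$). The two profiles share a majority graph but generally have different PO constraint sets, so the reduction is invalid. The repair is easy and is essentially what the paper does: run your inductive reconstruction directly on $\bsigma$. When a feasible PMC ranking $\sigma$ exists, the majority tournament is transitive, so Copeland scores strictly decrease along $\sigma$; and $\sigma$ satisfies every PO constraint of $\bsigma$ (domination implies majority, which implies the $\sigma$-order), so $\sigma$ itself witnesses feasibility at every greedy step.

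For winner monotonicity, your argument rests on the claim that $a$ ``retains (weakly) the top Copeland score,'' which presupposes that $a$ had the top Copeland score under $\bsigma$. That need not hold: LCPO ranks first the highest-Copeland candidate \emph{among those that can feasibly occupy position one subject to PO}, and the true Copeland winner may not be placeable first (this is precisely the situation of the $c^*$ construction in the proof of \Cref{thm:C1-PO}). So after elevating $a$ you must also rule out that some rival $b$ with a higher Copeland score than $a$ \emph{becomes} placeable first under $\bsigma'$; your closing paragraph only checks that $a$ itself is not blocked. The missing ingredient, which is the heart of the paper's argument, is a containment in the other direction: the set $R'$ of PO-admissible feasible rankings under $\bsigma'$ satisfies $R' \subseteq R$ (elevating $a$ can only add domination constraints, and since nothing dominated $a$ before, only constraints of the form ``$a$ above $b$''), while the set of candidates ranked above $a$ by Copeland can only shrink. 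Hence any rival that could not be placed first under $\bsigma$ still cannot under $\bsigma'$, and any $a$-first ranking in $R$ remains in $R'$. Without this argument the winner-monotonicity proof is incomplete.
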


\begin{proof}
      LCPO trivially satisfies PO since it always outputs a ranking that respects the PO criterion.  Moreover, since Copeland satisfies PMC, and whenever Copeland's ranking is in the domain,  leximax Copeland subject to PO returns this ranking, it clearly satisfies PMC.  
    
    Note that if an alternative $a$ is ranked first by at least half of the voters, then $a$ has the highest Copeland score, meaning that  leximax Copeland subject to PO  will rank this candidate first if this is possible. We see that this is indeed possible, by noticing that there is at least one feasible ranking in the input profile where $a$ is ranked first, and any such input ranking is feasible (by assumption) and satisfies the PO requirement. Therefore, majority consistency is satisfied. 
    
    It remains to show that LCPO satisfies winner monotonicity. Suppose that on input profile $\bsigma$,  the rule outputs a ranking $\sigma$ where candidate $a$ is ranked first. 
     Now, consider a profile $\bsigma'$ which is similar to $\bsigma$ with the only exception being a ranking in which $a$ is placed in a higher position.  Let $S$ be the set of agents that ranked above $a$ in Copeland's ranking under $\bsigma$ and let $S'$ be the set of agents that ranked above $a$ in Copeland's ranking under $\bsigma'$. Note that $S'\subseteq S$, since when moving from $\bsigma$ to $\bsigma'$, only the Copeland score of $a$ can increase, and therefore it is not possible for a candidate $b$ to beat $a$ under $\bsigma'$ but not under $\bsigma$.

    Now, suppose that $R$ and $R'$ are the set of rankings that satisfy the PO criterion with respect to $\bsigma$ and $\bsigma'$, respectively. We show that $R'\subseteq R$. First, note that since $a$ is ranked first under $\bsigma$, no alternative  dominates $a$ in $\bsigma$, as otherwise the PO criterion would be violated. Therefore, we get that no other alternative  dominates $a$  in $\bsigma'$ as well.  Moreover, note that     if  $b$  dominates $c$ in $\bsigma$, then this remains true in $\bsigma'$ as well. On the other hand, it is possible that $a$  dominates an alternative $b$ in $\bsigma'$ but not in $\bsigma$. From all of the above, we conclude that $R'\subseteq R$.  
     
    Since $a$ is ranked first in $\sigma$,  we get that for every candidate $b\in S$, there is no ranking in $R$ in which $b$ is ranked first, since otherwise, LCPO would  output such a  ranking. This also means that for every candidate $b$  in $S'$, there is no ranking in $R'$ in which $b$ is ranked first, since $R' \subseteq R$ and $S' \subseteq S$. Moreover, note that every ranking in $R$ in which $a$ is ranked first is also in $R'$ since it satisfies all the PO  restrictions of $\bsigma'$. Therefore, under $\bsigma'$, LCPO outputs a ranking in which $a$ is ranked first. 
\end{proof}

Leximax Copeland subject to PO can be implemented in polynomial time by solving \(O(|C|^2)\) relatively small linear programs. Specifically, given an input profile, we sequentially choose the candidate that is ranked in position \(r+1\) as follows. We denote by \(\sigma_r\) the partial ranking, where the first \(r\) positions have been fixed. For each candidate \(c\) that has not been ranked yet, we want to check if there is a parameter vector that adheres to the partial ranking \(\sigma_r\), respects the Pareto optimality criterion and ranks  \(c\) at position \(r+1\). Since all these constraints can be expressed as pairwise comparisons, we can use a linear program such as the one described in~\Cref{ftn} to check if such a feasible ranking exists. Among the candidates meeting this criterion, we select the one with the highest Copeland score for position \(r\).

\section{Discussion}
\label{sec:disc}

We conclude with a discussion of several extensions and limitations of our approach and results. 

First of all, we wish to emphasize that our results are theoretical. While they highlight some shortcomings of the current practice of RLHF, our goal was not to ``outperform'' existing RLHF methods. Rather, we see our model as giving a framework for understanding and comparing rules and methods\emdash it is a (useful, we believe) lens through which researchers and engineers can examine their AI alignment methods. 

Second, as written, our model has voters give their complete rankings, while in practice, this would be infeasible. In the real world, we are likely to elicit only relatively few pairwise comparisons per person. For our negative results, this assumption only makes them stronger: the BTL model fails both PO and PMC \emph{even} with access to complete voter rankings. By contrast, for the positive results, specifically implementing leximax Copeland subject to PO, this ostensibly seems like a serious limitation. However, the complete rankings are not necessary for computing this rule, rather, all we need to know are PO dominance relationships and majority directions. We can therefore apply the rule whenever we can approximate this information, for example, through sampling. An alternative approach is to infer a complete ranking of each voter by fitting a parameter vector based on their pairwise responses; this process of learning a complete ranking and then running voting rules has been used before in a variety of settings~\cite{noothigattu2018voting,lee2019webuildai}.

Third, our work initiates the study of the axiomatic method in our linear social choice model. However, we leave open many questions about which axioms are compatible and finding rules that achieve them. It should be clear by now that the primary challenge in linear social choice is that not every ranking over the candidates can be output. This means that essentially all known aggregation rules cannot directly be used without at least some modification. A natural direction to tackle is to try to find methods of converting known voting rules into linear aggregation ones while maintaining some of their axiomatic properties. To this end, we conclude with some preliminary results, and somewhat surprising findings within this space.

Some rules which optimize over rankings can be naturally transformed. For example, consider the Kemeny rule, which returns the ranking with the smallest pairwise disagreement over all votes. This can easily be transformed to the linear setting by simply outputting the optimal feasible ranking. In fact, in~\Cref{app:Kemeny}, we show that this rule carries over the property of {\em separability},\footnote{Formally, \emph{ranking separability} to distinguish it from the single-winner version.} a social choice axiom that is violated by Copeland (in the classical setting)  and leximax Copeland subject to PO (in our setting). We show this  in \Cref{app:Copeland-Separability}. However, quite strikingly, although separability remains, this transformation makes Kemeny no longer PO (\Cref{app:Kemeny}).

Finally, note that the ``leximax'' portion of leximax Copeland can be seen as a general purpose tool for mapping traditional rules to linear aggregation rules. In~\Cref{app:leximax-plurality}, we explore leximax plurality (run leximax on the ranking of candidates by plurality scores), and show that it satisfies majority consistency, winner monotonicity, and separability. Additionally, the ``subject to PO'' can be seen as another ``tool'' for enforcing the Pareto optimality criterion when a rule does not independently satisfy it. However, enforcing PO can again cause somewhat surprising results. For example, in~\Cref{app:Kemeny}, we show that linear Kemeny subject to PO, while now trivially satisfying PO, again violates separability. These observations indicate the challenges inherent in linear social choice, and we hope these open questions inspire fruitful follow-up research.

\section*{Acknowledgments}
This research was partially supported by the National Science Foundation under grants IIS-2147187, IIS-2229881, CCF-2007080, IIS-1905558, and IIS-2214141; and by the Office of Naval Research under grants N00014-20-1-2488 and N00014-24-1-2704.
\newpage

\bibliographystyle{plainnat}
\bibliography{references,abb,ultimate} %

\begin{thebibliography}{32}
\providecommand{\natexlab}[1]{#1}
\providecommand{\url}[1]{\texttt{#1}}
\expandafter\ifx\csname urlstyle\endcsname\relax
  \providecommand{\doi}[1]{doi: #1}\else
  \providecommand{\doi}{doi: \begingroup \urlstyle{rm}\Url}\fi

\bibitem[Ailon and Mohri(2010)]{ailon2010preference}
Nir Ailon and Mehryar Mohri.
\newblock Preference-based learning to rank.
\newblock \emph{Machine Learning}, 80:\penalty0 189--211, 2010.

\bibitem[Arrow(1951)]{Arr51}
Kenneth Arrow.
\newblock \emph{Social Choice and Individual Values}.
\newblock Wiley, 1951.

\bibitem[Berge(1963)]{berge1963topological}
Claude Berge.
\newblock \emph{Topological Spaces}.
\newblock Oliver and Boyd, 1963.

\bibitem[B{\i}y{\i}k et~al.(2020)B{\i}y{\i}k, Huynh, Kochenderfer, and Sadigh]{biyik2020active}
Erdem B{\i}y{\i}k, Nicolas Huynh, Mykel~J Kochenderfer, and Dorsa Sadigh.
\newblock Active preference-based {G}aussian process regression for reward learning.
\newblock \emph{arXiv preprint arXiv:2005.02575}, 2020.

\bibitem[Bradley and Terry(1952)]{bradley1952rank}
Ralph~Allan Bradley and Milton~E Terry.
\newblock Rank analysis of incomplete block designs: I. the method of paired comparisons.
\newblock \emph{Biometrika}, 39\penalty0 (3/4):\penalty0 324--345, 1952.

\bibitem[Brandt et~al.(2016)Brandt, Conitzer, Endriss, Lang, and Procaccia]{BCEL+16}
Felix Brandt, Vincent Conitzer, Ulle Endriss, Jer\^ome Lang, and Ariel~D. Procaccia, editors.
\newblock \emph{Handbook of Computational Social Choice}.
\newblock Cambridge University Press, 2016.

\bibitem[Chakraborty et~al.(2024)Chakraborty, Qiu, Yuan, Koppel, Huang, Manocha, Bedi, and Wang]{chakraborty2024maxmin}
Souradip Chakraborty, Jiahao Qiu, Hui Yuan, Alec Koppel, Furong Huang, Dinesh Manocha, Amrit~Singh Bedi, and Mengdi Wang.
\newblock Max{M}in-{RLHF}: Towards equitable alignment of large language models with diverse human preferences.
\newblock \emph{arXiv preprint arXiv:2402.08925}, 2024.

\bibitem[Christiano et~al.(2017)Christiano, Leike, Brown, Martic, Legg, and Amodei]{christiano2017deep}
Paul~F Christiano, Jan Leike, Tom Brown, Miljan Martic, Shane Legg, and Dario Amodei.
\newblock Deep reinforcement learning from human preferences.
\newblock \emph{Advances in Neural Information Processing Systems}, 30, 2017.

\bibitem[Clarke(1983)]{clarke1983oan}
F.H. Clarke.
\newblock \emph{Optimization and Nonsmooth Analysis}.
\newblock Wiley New York, 1983.

\bibitem[Conitzer et~al.(2024)Conitzer, Freedman, Heitzig, Holliday, Jacobs, Lambert, Moss{\'e}, Pacuit, Russell, Schoelkopf, Tewolde, and Zwicker]{conitzer2024social}
Vincent Conitzer, Rachel Freedman, Jobst Heitzig, Wesley~H Holliday, Bob~M Jacobs, Nathan Lambert, Milan Moss{\'e}, Eric Pacuit, Stuart Russell, Hailey Schoelkopf, Emanuel Tewolde, and William~S. Zwicker.
\newblock Social choice for {AI} alignment: Dealing with diverse human feedback.
\newblock \emph{arXiv preprint arXiv:2404.10271}, 2024.

\bibitem[Cover(1967)]{cover1967number}
Thomas~M Cover.
\newblock The number of linearly inducible orderings of points in $d$-space.
\newblock \emph{SIAM Journal on Applied Mathematics}, 15\penalty0 (2):\penalty0 434--439, 1967.

\bibitem[Dai and Fleisig(2024)]{dai2024mapping}
Jessica Dai and Eve Fleisig.
\newblock Mapping social choice theory to {RLHF}.
\newblock \emph{arXiv preprint arXiv:2404.13038}, 2024.

\bibitem[Edelsbrunner(1987)]{Ed87}
H.~Edelsbrunner.
\newblock \emph{Algorithms in Combinatorial Geometry}, volume~10 of \emph{EATCS Monographs on Theoretical Computer Science}.
\newblock Springer, 1987.

\bibitem[Fishburn(1977)]{fishburn1977condorcet}
Peter~C Fishburn.
\newblock Condorcet social choice functions.
\newblock \emph{SIAM Journal on applied Mathematics}, 33\penalty0 (3):\penalty0 469--489, 1977.

\bibitem[Ge et~al.(2024)Ge, Juba, and Vorobeychik]{ge2024learning}
Luise Ge, Brendan Juba, and Yevgeniy Vorobeychik.
\newblock Learning {L}inear {U}tility {F}unctions {F}rom {P}airwise {C}omparison {Q}ueries.
\newblock \emph{arXiv preprint arXiv:2405.02612}, 2024.

\bibitem[Gould(1974)]{gould1974note}
Henry~W Gould.
\newblock A note on the number of linearly inducible orderings of points in $d$-space.
\newblock \emph{SIAM Journal on Applied Mathematics}, 26\penalty0 (3):\penalty0 528--530, 1974.

\bibitem[Kupcsik et~al.(2018)Kupcsik, Hsu, and Lee]{kupcsik2018learning}
Andras Kupcsik, David Hsu, and Wee~Sun Lee.
\newblock Learning dynamic robot-to-human object handover from human feedback.
\newblock \emph{Robotics Research: Volume 1}, pages 161--176, 2018.

\bibitem[Lee et~al.(2019)Lee, Kusbit, Kahng, Kim, Yuan, Chan, See, Noothigattu, Lee, Psomas, et~al.]{lee2019webuildai}
Min~Kyung Lee, Daniel Kusbit, Anson Kahng, Ji~Tae Kim, Xinran Yuan, Allissa Chan, Daniel See, Ritesh Noothigattu, Siheon Lee, Alexandros Psomas, et~al.
\newblock Webuildai: Participatory framework for algorithmic governance.
\newblock In \emph{Proceedings 22nd ACM Conference on Computer-Supported Cooperative Work and Social Computing,}, pages 1--35, 2019.

\bibitem[Mishra(2023)]{mishra2023ai}
Abhilash Mishra.
\newblock {AI} alignment and social choice: Fundamental limitations and policy implications.
\newblock \emph{arXiv preprint arXiv:2310.16048}, 2023.

\bibitem[Noothigattu et~al.(2018)Noothigattu, Gaikwad, Awad, Dsouza, Rahwan, Ravikumar, and Procaccia]{noothigattu2018voting}
Ritesh Noothigattu, Snehalkumar Gaikwad, Edmond Awad, Sohan Dsouza, Iyad Rahwan, Pradeep Ravikumar, and Ariel Procaccia.
\newblock A voting-based system for ethical decision making.
\newblock In \emph{Proceedings of the 32th AAAI Conference on Artificial Intelligence}, pages 1587--1594, 2018.

\bibitem[Noothigattu et~al.(2020)Noothigattu, Peters, and Procaccia]{noothigattu2020axioms}
Ritesh Noothigattu, Dominik Peters, and Ariel~D Procaccia.
\newblock Axioms for learning from pairwise comparisons.
\newblock \emph{Advances in Neural Information Processing Systems}, 33:\penalty0 17745--17754, 2020.

\bibitem[Ouyang et~al.(2022)Ouyang, Wu, Jiang, Almeida, Wainwright, Mishkin, Zhang, Agarwal, Slama, Ray, Schulman, Hilton, Kelton, Miller, Simens, Askell, Welinder, Christiano, Leike, and Lowe]{instructgpt}
Long Ouyang, Jeffrey Wu, Xu~Jiang, Diogo Almeida, Carroll Wainwright, Pamela Mishkin, Chong Zhang, Sandhini Agarwal, Katarina Slama, Alex Ray, John Schulman, Jacob Hilton, Fraser Kelton, Luke Miller, Maddie Simens, Amanda Askell, Peter Welinder, Paul Christiano, Jan Leike, and Ryan Lowe.
\newblock Training language models to follow instructions with human feedback.
\newblock \emph{Advances in Neural Information Processing Systems}, 35:\penalty0 27730--27744, 2022.

\bibitem[Park et~al.(2024)Park, Liu, Zhang, and Ozdaglar]{park2024principled}
Chanwoo Park, Mingyang Liu, Kaiqing Zhang, and Asuman Ozdaglar.
\newblock Principled {RLHF} from heterogeneous feedback via personalization and preference aggregation.
\newblock \emph{arXiv preprint arXiv:2405.00254}, 2024.

\bibitem[Rockafellar(1970)]{rockafellar-1970}
R~Tyrrell Rockafellar.
\newblock \emph{Convex Analysis}.
\newblock Princeton University Press, 1970.

\bibitem[Siththaranjan et~al.(2023)Siththaranjan, Laidlaw, and Hadfield-Menell]{siththaranjan2023distributional}
Anand Siththaranjan, Cassidy Laidlaw, and Dylan Hadfield-Menell.
\newblock Distributional preference learning: Understanding and accounting for hidden context in {RLHF}.
\newblock \emph{arXiv preprint arXiv:2312.08358}, 2023.

\bibitem[Stiennon et~al.(2020)Stiennon, Ouyang, Wu, Ziegler, Lowe, Voss, Radford, Amodei, and Christiano]{stiennon2020learning}
Nisan Stiennon, Long Ouyang, Jeffrey Wu, Daniel Ziegler, Ryan Lowe, Chelsea Voss, Alec Radford, Dario Amodei, and Paul~F Christiano.
\newblock Learning to summarize with human feedback.
\newblock \emph{Advances in Neural Information Processing Systems}, 33:\penalty0 3008--3021, 2020.

\bibitem[Swamy et~al.(2024)Swamy, Dann, Kidambi, Wu, and Agarwal]{swamy2024minimaximalist}
Gokul Swamy, Christoph Dann, Rahul Kidambi, Zhiwei~Steven Wu, and Alekh Agarwal.
\newblock A {m}inimaximalist {a}pproach to {r}einforcement {l}earning from {h}uman {f}eedback.
\newblock \emph{arXiv preprint arXiv:2401.04056}, 2024.

\bibitem[Viappiani and Boutilier(2010)]{viappiani2010optimal}
Paolo Viappiani and Craig Boutilier.
\newblock Optimal {B}ayesian recommendation sets and myopically optimal choice query sets.
\newblock \emph{Advances in Neural Information Processing Systems}, 23, 2010.

\bibitem[Young(1988)]{Young88}
H~Peyton Young.
\newblock Condorcet's theory of voting.
\newblock \emph{The American Political Science Review}, 82\penalty0 (4):\penalty0 1231--1244, 1988.

\bibitem[Zhong et~al.(2024)Zhong, Deng, Su, Wu, and Zhang]{zhong2024provable}
Huiying Zhong, Zhun Deng, Weijie~J Su, Zhiwei~Steven Wu, and Linjun Zhang.
\newblock Provable multi-party reinforcement learning with diverse human feedback.
\newblock \emph{arXiv preprint arXiv:2403.05006}, 2024.

\bibitem[Zhu et~al.(2023)Zhu, Jordan, and Jiao]{zhu2023principled}
Banghua Zhu, Michael Jordan, and Jiantao Jiao.
\newblock Principled reinforcement learning with human feedback from pairwise or $k$-wise comparisons.
\newblock In \emph{Proceedings of the 40th International Conference on Machine Learning}, pages 43037--43067, 2023.

\bibitem[Ziegler et~al.(2019)Ziegler, Stiennon, Wu, Brown, Radford, Amodei, Christiano, and Irving]{ziegler2019fine}
Daniel~M Ziegler, Nisan Stiennon, Jeffrey Wu, Tom~B Brown, Alec Radford, Dario Amodei, Paul Christiano, and Geoffrey Irving.
\newblock Fine-tuning language models from human preferences.
\newblock \emph{arXiv preprint arXiv:1909.08593}, 2019.

\end{thebibliography}

\newpage

\newpage
\appendix

\section{Deferred Proofs}
\label{app:proofs}

\subsection{Proof of~\Cref{lem:unconstr}}
   \begin{proof}
     We begin with some observations on $g$. First, we have that since $\ell$ is nonnegative, $g$ must also be nonnegative. This along with the fact that $\lim_{x \to \infty} \ell(x) = \infty$, we have that both $\lim_{x \to \infty} g(x) = \infty$ and $\lim_{x \to -\infty} g(x) = \infty$.  
    Together, these imply that $\mathcal{L}^{unconstr}$ attains a minimum. Indeed, $\mathcal{L}^{unconstr}(0, 0) = 3g(0)$, and there is some bound $B$ such that for all $x > B$ and $x < -B$, $g(x) > 3g(0)$. We can therefore restrict the optimization problem to $r_a, r_b \in [-B, B]$ without changing the solutions. Since $\mathcal{L}^{unconstr}$ is continuous and $[-B, B]^2$ is compact, a minimum is attained.

    Next, note that $g$ is convex because compositions of convex functions with monotonic functions and convex combinations of convex functions are convex~\cite{rockafellar-1970}. From this, we claim that if there is an optimal solution $(r_a, r_b)$, then $(r_a, r_a / 2)$ is also an optimal solution. 
    Indeed, fix such an $(r_a, r_b)$,
\begin{align*}
    \mathcal{L}^{unconstr}(r_a, r_a / 2)
    &= g(r_a - r_a / 2) + g(r_a) + g(r_a / 2)\\
    &= g(r_a) + 2g(r_a / 2)\\
    &= g(r_a) + 2g(1/2(r_a - r_b) + 1/2r_b)\\
    &\le g(r_a) + 2(1/2g(r_a - r_b) + 1/2 g(r_b))\\ 
    &= \mathcal{L}^{unconstr}(r_a, r_b),
\end{align*}
where the inequality comes from convexity. This implies that $(r_a, r_a / 2)$ is also optimal.

    By above, we have that if $(r_b, r_a)$ is optimal, it must be the case that $r_a$ minimizes
     \[
        h(r_a) := 2g(r_a / 2) + g(r_a).
    \]
    Observe that $h$ is again convex by monotonic composition and convex combinations.
    
    Next, we will make use of the following facts about convex functions. Although they need not be differentiable, right- and left-hand derivatives always exist. For a function $k$ these are defined as
    \begin{align*}
    	k'_{+}(x) &= \lim_{h \to 0^+} \frac{k(x + h) - k(x)}{h}\\
    	k'_{-}(x) &= \lim_{h \to 0^-} \frac{k(x + h) - k(x)}{h}.
    \end{align*}
Further, if \( k \) is convex, we have that~\cite{rockafellar-1970}:
\begin{align*}
    &\text{(i)} \quad k'_+ \text{ and } k'_- \text{ are nondecreasing,} \\
    &\text{(ii)} \quad k'_-(x) \le k'_+(x) \quad \text{for every } x, \\
    &\text{(iii)} \quad x \text{ minimizes } k \text{ if and only if } k'_-(x) \le 0 \le k'_+(x), \\
    &\text{(iv)} \quad k'_+ \text{ and } k'_- \text{ are right and left continuous, respectively}.
\end{align*}

They also follow standard linearity and chain rule properties, which allow for simpler computation. For example:
\begin{align*}
    &\text{if } k(x) = a\alpha(x) + b \beta(x), \text{ then } k'_+(x) = a\alpha'_+(x) + b\beta'_+(x), \\
    &\text{if } k(x) = \alpha(\gamma x), \text{ then } k'_+(x) = \gamma \alpha'_+(\gamma x) \text{ if } \gamma \ge 0, \text{ and } k'_+(x) = \gamma \alpha'_-(\gamma x) \text{ if } \gamma < 0,
\end{align*}
(all of these hold for left-hand derivatives by swapping the positions of \( + \) and \( - \))~\cite{clarke1983oan}.

    Next, we claim that we can find a valid $p$ (rational with $1/2 < p < 1$) and $w > 0$ such that $g'_{+}(w) > 0$, while $h'_{+}(w) < 0$. To that end, expanding the first derivative, we have
    \begin{align*}
    	g'_{+}(x) &= - p \ell'_{-}(-x) + (1 - p) \ell'_{+}(x).  \\
    \end{align*}
    As long as $\ell'_{-}(-x) + \ell'_{+}(x) > 0$, this is strictly more than $0$ for $p$ satisfying 
\begin{equation}\label{ineq:f}
	p < \frac{\ell'_{+}(x)}{\ell'_{-}(-x) + \ell'_{+}(x)}.
\end{equation}
    For $h$,
    \begin{align*}
    	h'_{+}(x) &= 2 \cdot 1/2 g'_{+}(x/2) + g'_{+}(x)\\
    	&= -p\ell'_{-}(-x/2) + ( 1-p)\ell'_{+}(x/2)  - p \ell'_{-}(-x) + ( 1- p)\ell'_{+}(x).\\
    	&= -p\big[\ell'_{-}(-x/2) + \ell'_{-}(-x)\big] + (1 - p)\big[\ell'_{+}(x/2) + \ell'_{+}(x)\big].
    \end{align*}
    As long as $\ell'_{-}(-x/2) + \ell'_{-}(-x) + \ell'_{+}(x/2) + \ell'_{+}(x) > 0$, then this is strictly less than $0$ for

\begin{equation}\label{ineq:g}
	p > \frac{\ell'_{+}(x/2) + \ell'_{+}(x)}{\ell'_{-}(-x/2) + \ell'_{-}(-x) + \ell'_{+}(x/2) + \ell'_{+}(x)}.
\end{equation}
Now, choose $w > 0$ such that $\ell'_{-}(-w/2) > \ell'_{-}(-w) \ge 0$. This is possible by using the following procedure. We know $\ell'_{-}(0) > 0$ (as otherwise $\ell(0) < \ell(x)$ for all $x < 0$, contradicting our assumption on $\ell$). We will split into cases depending on whether $\ell$ is nondecreasing or strictly-convex (at least one must be true by the theorem assumptions).

First, suppose \( \ell \) is non-decreasing. This implies that \( \ell'_-(x) \ge 0 \) for all \( x \). We know that there is a point \( x < 0 \) such that \( \ell'_{-}(x) < \ell'_{-}(0) \) (as otherwise \( \ell \) would eventually become negative). Let \( d = \ell'_{-}(x) \). Take \( w \) such that 
\[
-w = \sup\{x \mid \ell'_{-}(x) = d\}.
\] 
Note that \( \ell'_{-}(-w) = d \) because \( \ell'_{-} \) is left continuous (from (iv) above). Since \( -w < 0 \), so \( -\frac{w}{2} > -w \). This implies that \( \ell'_{-}(-\frac{w}{2}) > d \), as otherwise \( -w \) would not be the supremum of such points. In addition, we have that \( d \ge 0 \) because \( \ell \) is nondecreasing.

Next, suppose $\ell$ is strictly convex. then $\ell'_{-}$ is strictly increasing. Further, since it is left continuous and $\ell'_{0}(0) > 0$, there is a $\gamma > 0$ such that for all $x \in [-\gamma, 0]$, $\ell'_{-}(x) \ge 0$. Therefore, choosing $w = \gamma$ will do: $\ell'_{-}(-\gamma) \ge 0$ by choice of $\gamma$, and $-\gamma / 2 > -\gamma$, so $\ell'_{-}(-\gamma / 2) > \ell'_{-}(-\gamma)$ since $\ell'_{-}$ is strictly increasing.

For this choice of $w$, we first claim that the preconditions of denominators being positive hold for \eqref{ineq:f} and \eqref{ineq:g}. Indeed, let us write $z_1, z_2, z_3, z_4$ for $\ell'_{-}(-2w),\ell'_{-}(-w), \ell'_{+}(w), \ell'_{+}(2w)$. We know that 
\[
z_1 \le z_2 \le z_3 \le z_4
\]
by properties of convexity,  and since $\ell'_{-}(-w/2) > \ell'_{-}(-w) \ge 0$, we have that $0 \le z_1 < z_2$. The denominators are of the form $z_1 + z_4$ and $z_1 + z_2 + z_3 + z_4$, which are now both necessarily positive. In addition, we also claim that a rational $p$ with $1/2 < p < 1$ satisfying both inequalities \eqref{ineq:f} and \eqref{ineq:g} will exist.  Note that inequality \eqref{ineq:f} can now be represented as $\frac{z_4}{z_1 + z_4}$, and we have that 
\[
\frac{z_4}{z_1 + z_4} \le 1
\]
because  $0 \le z_1 < z_4$. Additionally, inequality \eqref{ineq:g} can be represented as 
\[
\frac{z_3 + z_4}{z_1 + z_2 + z_3 + z_4}
\]
which is at least $1/2$ because $z_3+z_4 > z_1 + z_2$. Finally, we have
\[
	\frac{z_3}{z_2 + z_3} \le \frac{z_4}{z_2 + z_4} < \frac{z_4}{z_1 + z_4}
\]
which implies that
\[
	\frac{z_3 + z_4}{z_1 + z_2 + z_3 + z_4} < \frac{z_4}{z_1 + z_4}.
\]
Hence, there exists some rational $p$ in this interval, which is necessarily between $1/2$ and $1$, as needed.

To summarize, we have found a valid $p$ and value $w > 0$ such that $h'_{+}(w) < 0$ while $g'_{+}(w) > 0$. Because $h'_+$ is right continuous (from (iv) above), there is some $\gamma > 0$ such that $h'_{+}(w + \gamma) < 0$ as well. We will now show for all $(r_a, r_b) \in OPT^{unconstr}$, $r_a > w + \gamma$ and $r_b \le w$. Indeed, note that $r_a$ must minimize $h$, so $h'_{+}(r_a) \ge 0$ (from (iii) above), which implies  $r_a >  w + c$. For $r_b$, suppose for a contradiction $r_b > w$ as well. Note that $g'_{+}(w) > 0$ means $g$ is increasing to the right of $w$.  Let $d = \min(r_a, r_b) - w > 0$. Consider $(r'_a, r'_b) = (r_a - d, r_b - d)$. We then have
\begin{align*}
	\mathcal{L}^{unconstr}(r'_a, r'_b)&= g(r'_a - r'_b) + g(r'_a) + g(r'_b)\\
 &= g(r_a - r_b) + g(r'_a) + g(r'_b)\\
	&< g(r_a - r_b) + g(r_a) + g(r_b)\\
	&= \mathcal{L}^{unconstr}(r_a, r_b). 
	\end{align*}
 where the equality holds because  $r'_a - r'_b = r_a - r_b$ and the inequality because $g$ is increasing to the right of $w$. Therefore, we reach a contradiction, because then $(r_a, r_b)$ would not be optimal.  
	Thus, we have found values satisfying the lemma statement with $A_1 = w$ and $A_2 = w + \gamma$.
\end{proof}

\subsection{Proof of~\Cref{lem:core}}
\begin{proof}
Fix $p$ inducing a nonempty $OPT^{unconstr}$ satisfying \Cref{lem:unconstr} with values $A_1$ and $A_2$.
We first claim that for any $(r_a, r_b)$ (regardless of optimality), it is possible to find a $\theta$ such that $r_\theta(a) = r_a$ and $r_\theta(b) = r_b$. Indeed, note that
\[
	\begin{pmatrix}
	 2 & 1\\
	 1 & 1	
	\end{pmatrix}
	\begin{pmatrix}
		\theta_1 \\ \theta_2	
	\end{pmatrix}
	 = 
	 \begin{pmatrix}
	 	r_\theta(a)\\
	 	r_\theta(b)
	 \end{pmatrix}.
\]
Since $M = \begin{pmatrix}
	 2 & 1\\
	 1 & 1	
	\end{pmatrix}
$ is invertible with inverse 
\[
M^{-1} = \begin{pmatrix}
	 1 & -1\\
	 -1 & 2	
	\end{pmatrix},
\]
for any $r_a, r_b$, we can simply set 
\[
\theta =M^{-1} \begin{pmatrix} r_a \\ r_b\end{pmatrix}.
\]
Thus, $OPT^{com}$ is nonempty, and is simply the image of $OPT^{unconstr}$ under $M^{-1}$. Now, fix $(r_a, r_b) \in OPT^{unconstr}$, and let $\theta = M^{-1}\begin{pmatrix} r_a \\ r_b \end{pmatrix}$. By assumption, we have $r_a >  A_2$ and $r_b \le A_1$. This implies that $\theta_1 = r_a - r_b > A_2 - A_1$, while $\theta_2 = 2r_b - r_a < 2A_1 - A_2$. Thus, setting $A_3 = A_2 - A_1$ and $A_4 = 2A_1 - A_2$ satisfy the desired properties.
\end{proof}

\subsection{Proof of \Cref{lem:opt0}}

\begin{proof}
	We first claim that $OPT(0) = OPT^{core}$. This will follow from showing
   \[
        \mathcal{L}^0(\theta) = 4 \cdot \mathcal{L}^{core}(\theta) + 3 \cdot \ell(0).
   \]
   Indeed, in $\mathcal{L}^0$, the copied candidates are in exactly the same location as their counterparts. Hence, each term in $\mathcal{L}^{core}$ appears $4$ times, one for each combination of original and copy. In addition to these, there are the $6$ terms for each ordered pair of $(a, a')$, $(b, b')$, and $(c, c')$. Note that, each $r_{\theta}(x) = r_{\theta}(x')$ for each $x \in \{a, b, c\}$  regardless of $\theta$  since they are in the same location. Therefore, the $\ell$ portion is always $\ell(0)$, and the corresponding $w_{x \succ x'}$ and $w_{x' \succ x}$ terms add up to one for each pair. Hence, the total sum of these terms is $3 \cdot \ell(0)$. Since $\mathcal{L}^0$ is equivalent to $\mathcal{L}^{core}$ up to positive scaling and translation, they have the same optima.

	Finally, fix a $\theta \in OPT(0)$. Since $\theta \in OPT^{core}$, by \Cref{lem:core}, $\theta_1 > A_3$ and $\theta_2 < A_4$. Thus, $$r_\theta(c') = \varepsilon \cdot (-\theta_1 + \delta \theta_2) < \varepsilon(-A_3 + \delta A_4) < 0 = r_\theta(c)$$ by choice of $\delta$. Hence, $\theta \in \overline{R^{c' \succ c}}$
\end{proof}

\subsection{Proof of \Cref{lem:continuity}}
  \begin{proof}
 We first show that when optimizing each $\mathcal{L}^\varepsilon$, it is sufficient to consider only $\theta$ coming from a bounded region. Indeed, observe that $\mathcal{L}^\varepsilon(\mathbf{0}) = \binom{6}{2} \ell(0)$ for all $\varepsilon$. Since $\lim_{x \to \infty} \ell(x) = \infty$, we can find some $B > 0$ such that for all $x > B$, $\ell(x) > \frac{\binom{6}{2} \ell(0)}{1 - p} = \frac{\mathcal{L}^\varepsilon(\mathbf{0})}{1-p}$. For a pair of candidates $x \ne y \in C^{com}$, in the two terms concerning these candidates, we have
 \begin{align*}
     &\phantom{{}={}}w_{x \succ y}  \ell(r_{\theta}(y)- r_\theta(x) ) +  w_{y \succ x} \ell(r_{\theta}(x) - r_\theta(y)) \\
     &\ge ( 1- p)\left( \ell(r_{\theta}(y) - r_\theta(x)) + \ell(r_{\theta}(x) - r_\theta(y))\right)\\
     &\ge (1 - p) \ell(|r_{\theta}(y) - r_\theta(x)|).
 \end{align*}

Applying this to $\{a, b\}$ and $\{b, c\}$,
\begin{align*}
    \mathcal{L}^\varepsilon(\theta) &\ge (1 - p)\left(\ell(|r_{\theta}(a) - r_\theta(b)| + \ell(|r_{\theta}(b)) - r_\theta(c)|) \right)\\
    &= (1 - p)(\ell(|\theta_1|) + \ell(|\theta_1 + \theta_2|)
\end{align*}
This implies that we may restrict our attention to $\theta$ in the region 
\[
R^{bounded} = \{\theta \mid |\theta_1| \le B, |\theta_2| \le 2B \}.
\]
Indeed, for $\theta \notin R^{bounded}$, either $|\theta_1| > B$ or $|\theta_1 + \theta_2| > B$. In either case, we have $\mathcal{L}^\varepsilon(\theta) \ge (1 - p) \ell(B) > \mathcal{L}^\varepsilon(\mathbf{0})$.

    Note that $\mathcal{L}^\varepsilon(\theta)$ is continuous not only in $\theta$, but also in $\varepsilon$. Additionally, $R^{bounded}$ is closed and bounded, and hence, compact.
   Therefore, by Berge's Maximum Theorem,  $OPT(\varepsilon)$ is nonempty and \emph{upper semi-continuous} in $\varepsilon$~\cite{berge1963topological}.
As per the definition of upper semi-continuous, since $OPT(0) \subseteq \overline{R^{c' \succ c}}$, an open set, for sufficiently small $\varepsilon > 0$, $OPT(\varepsilon) \subseteq \overline{R^{c' \succ c}}$. Finally, note that $R^{c' \succ c} \cap R^{bounded}$ is compact, so a minimum is attained, and this minimum must therefore be strictly larger than the values attained by members of $OPT(\varepsilon)$.
    \end{proof}

\subsection{Proof of~\Cref{thm:loss-based-pmc}}
\label{app:loss-based-pmc}
\begin{proof}
Without loss of generality, we may assume that $\inf_x \ell(x) = 0$, as otherwise we could translate $\ell$ without affecting the optimization problem. Fix a profile $\bsigma$ with feasible PMC ranking $\sigma$, and let $\theta^{PMC}$ be a non-degenerate parameter vector that induces $\sigma$.

First, we show that $\inf_\theta \mathcal{L}^{maj}(\theta; \bsigma, \ell) = 0$. Indeed, note that $c \cdot \theta^{PMC} \in R^{\bsigma}$ for all $c > 0$. Further, note that for any $a, b$ with $w_{a \succ b}(\bsigma) > 1/2$, $r_{\theta^{PMC}}(b) - r_{\theta^{PMC}}(a) < 0$. Therefore, by making $c$ large, the nonzero terms in $\mathcal{L}^{maj}$ will have an input to $\ell$ negative and becoming arbitrarily large in magnitude. Since $\ell$ is nondecreasing, these approach the infemum of $0$. 

Next, for any $\sigma' \ne \sigma$, $\inf_\theta \mathcal{L}^{maj}(\theta; \bsigma, \ell) \ge \ell(0)$. Indeed, there must be some pair of candidates $a, b$ with $a \succ_{\sigma} b$ and $b \succ_{\sigma'} a$. For any $\theta \in R^{\sigma'}$, $r_\theta(b) \ge r_\theta(a)$, so $\ell(r_\theta(b) - r_\theta(a)) \ge \ell(0)$, and this lower bounds the loss function.
\end{proof}

\subsection{Proof of~\Cref{thm:C1-PO}}
\label{app:C1-PO}
\begin{proof}
    We construct an explicit instance and pairwise majority relationships such that no matter what feasible ranking a rule picks, there is an underlying profile where that output was a PO violation.

    We will have 9 candidates; 8 will be labeled $c_i^+$ and $c_i^-$ for $i = 1, 2, 3, 4$, and one labeled $c^*$. They will have feature vectors in $\mathbb{R}^4$. Each $c_i^\pm$ will be located at $\x_{c_i^{\pm}} = \pm e_i$ where $e_i$ is the $i$'th standard basis vector, i.e., $c_2^+$ is at $(0, 1, 0, 0)$ and $c_4^-$ is at $(0, 0, 0, -1)$. Finally, $c^*$ will be located at $(1/5, 1/5, 1/5, 1/5)$. 

    There will be $5$ voters. Their pairwise majority graph will be as follows. Candidate $c^*$ will pairwise beat all others. In addition, each $c_i^+$ will pairwise beat each $c_j^-$. Among the $c_i^+$ candidates, there will be a cycle $c_1^+ \succ c_2^+ \succ c_3^+ \succ c_4^+ \succ c_1^+$ and between the remaining two pairs $c_1^+ \succ c_3^+$ and $c_2^+ \succ c_4^+$. The $c_i^-$ candidates will be the exact reverse of this, i.e., a cycle $c_4^- \succ c_3^- \succ c_2^- \succ c_1^- \succ c_4^-$, along with $c_3^- \succ c_1^-$ and $c_4^- \succ c_2^-$. A pictorial representation can be found in \Cref{fig:PMG}.
    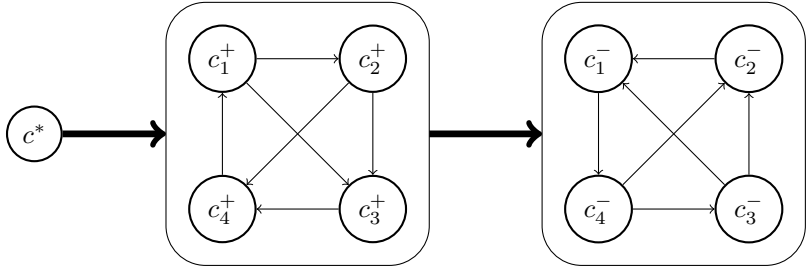
\begin{figure}
        \centering
        \begin{tikzpicture}
            \begin{scope}[every node/.style={circle,thick,draw}]
                \node (c*) at (-.5,1) {$c^*$};
                \node (c1+) at (2, 2) {$c_1^+$};
                \node (c2+) at (4, 2) {$c_2^+$};
                \node (c3+) at (4, 0) {$c_3^+$};
                \node (c4+) at (2, 0) {$c_4^+$};
                \node (c1-) at (7, 2) {$c_1^-$};
                \node (c2-) at (9, 2) {$c_2^-$};
                \node (c3-) at (9, 0) {$c_3^-$};
                \node (c4-) at (7, 0) {$c_4^-$};
            \end{scope}

        \begin{scope}[every edge/.style={draw=very thick}]
            \draw[->] (c1+) -- (c2+);
            \draw[->] (c2+) -- (c3+);
            \draw[->] (c3+) -- (c4+);
            \draw[->] (c4+) -- (c1+);
            \draw[->] (c1+) -- (c3+);
            \draw[->] (c2+) -- (c4+);

            \draw[<-] (c1-) -- (c2-);
            \draw[<-] (c2-) -- (c3-);
            \draw[<-] (c3-) -- (c4-);
            \draw[<-] (c4-) -- (c1-);
            \draw[<-] (c1-) -- (c3-);
            \draw[<-] (c2-) -- (c4-);
        \end{scope}

        \draw[rounded corners=15pt](1.25,-.75)--(1.25,2.75)--(4.75,2.75)--(4.75,-.75)--cycle;

        \draw[rounded corners=15pt](6.25,-.75)--(6.25,2.75)--(9.75,2.75)--(9.75,-.75)--cycle;

        \draw[->, line width=.9mm] (c*) -- (1.25, 1);

        \draw[->, line width=.9mm] (4.75, 1) -- (6.25, 1);

        \end{tikzpicture}
        \caption{Graph showing pairwise majority relationship between candidates. Regular edges show relationships among $c_i^+$ candidates and among $c_i^-$ candidates. Thick edges indicate that $c^*$ pairwise beats all candidates, and each $c_i^+$ pairwise beats each $c_j^-$ candidate.}
        \label{fig:PMG}
    \end{figure}

    A C1 rule must pick a $\theta$ solely based on the pairwise majority graph. We will show that regardless of what $\theta$ it outputs, this will lead to a PO violation.
    
    To that end, the first fact we will show is that no $\theta$ can rank $c^*$ first. Indeed, for any $\theta$, $r_\theta(c^*) = \frac{1}{5} \sum_i \theta_i \le \frac{1}{5} \sum_i |\theta_i| <  \max_i |\theta_i|$. On the other hand, for $i$ maximizing $|\theta_i|$, at least one of $c_i^\pm$ will achieve this reward, strictly larger than $r_\theta(c^*)$. Hence, regardless of the output $\theta$, some candidate must be ranked above $c^*$. We will show that this leads to a PO violation.

    To construct profiles consistent with the pairwise majority graph, voters will always have rankings of the following form:
    \begin{equation}\label{eq:ranking-form}
        c_i^+ \succ c^* \succ c_j^+ \succ c_k^+ \succ c_\ell^+ \succ c_\ell^- \succ c_k^- \succ c_j^- \succ c_i-
    \end{equation}
    for some $\{i, j, k, \ell\} = \{1, 2, 3, 4\}$. In other words, they will rank a single $+$ candidate above $c^*$ and the rest in some order, followed by all $-$ candidates in the reverse order. This is always achievable with the voter vector that puts values $1, 3\varepsilon, 2\varepsilon, 1\varepsilon$ in entires $i, j, k$, and $\ell$, respectively, for some small $\varepsilon > 0$.

    Fix an output $\theta$ with induced ranking $\sigma$. There must be at least one candidate $c \ne c^*$ ranked above $c^*$. 
    We now split into cases depending on which candidate this is. For each choice, we will construct a profile consistent with the pairwise majority graph where the candidate above $c^*$ is Pareto dominated by $c^*$. We will describe each voter's ranking only by an ordering over the $+$ candidates, assuming they otherwise take the form described in \eqref{eq:ranking-form}. Note that $c^*$ is always ranked second, so if a candidate is never ranked first, they are Pareto dominated by $c^*$.  The profiles for each candidate $c_i^+$ can be found in the following table. One can check that all pairwise relationships are satisfied, and the corresponding $c_i^+$ is never ranked first.

\begin{table}[h!]
  \caption{Profiles with $5$ voters and consistent with the pairwise majority graph where where the corresponding candidate is PO-dominated by $c^*$. The notation $1: (2, 1, 3, 4)$ implies one voter has the ranking in the form of \eqref{eq:ranking-form} with $(i, j, k, \ell) = (2, 1, 3, 4)$. }
  \label{configurations-table}
  \centering
  \begin{tabular}{cccc}
    \toprule
    $c_1^+$ & $c_2^+$ & $c_3^+$ & $c_4^+$ \\
    \midrule
    \begin{tabular}{c}
      1:  $(2, 1, 3, 4)$ \\
      1:  $(3, 1, 2, 4)$ \\
      1:  $(3, 2, 4, 1)$ \\
      2: $(4, 1, 2, 3)$
    \end{tabular} &
    \begin{tabular}{c}
      2:  $(1, 2, 3, 4)$ \\
      2:  $(3, 2, 4, 1)$ \\
      1:  $(4, 1, 2, 3)$
    \end{tabular} &
    \begin{tabular}{c}
      2:  $(1, 2, 3, 4)$ \\
      2:  $(2, 3, 4, 1)$ \\
      1:  $(4, 1, 2, 3)$
    \end{tabular} &
    \begin{tabular}{c}
      2:  $(1, 2, 3, 4)$ \\
      2:  $(2, 4, 1, 3)$ \\
      1:  $(3, 4, 1, 2)$
    \end{tabular} \\
    \bottomrule
  \end{tabular}
\end{table}

Finally, if a $-$ candidate is ranked above $c^*$, then any of the following profiles work, as all $-$ candidates are Pareto dominated by $c^*$ with rankings shown in \eqref{eq:ranking-form}.
\end{proof}
\section{PMC Infeasibility Example} \label{app:proof_pmc_feasibility}

Consider the case with \(d=3\) and seven candidates: one special candidate \(a^*\) located at \((1/4, 1/4, 1/4)\), and six others \(c^{\pm}_i\) located at standard basis vectors \(e^{\pm}_i\). We have three voters with parameter vectors \((1, 2\varepsilon, \varepsilon)\), \((2\varepsilon, 1, \varepsilon)\), and \((2\varepsilon, \varepsilon, 1)\), where \(\varepsilon < 1/5\) is a small positive number. These voters have the following induced rankings:

\begin{center}	
\begin{tabular}{@{}cccc@{}}
\toprule
Rank & \(v_1\) & \(v_2\) & \(v_3\) \\
\midrule
1 & \(c^+_1\) & \(c^+_2\) & \(c^+_3\) \\
2 & \(a^*\) & \(a^*\) & \(a^*\) \\
3 & \(c^+_2\) & \(c^+_1\) & \(c^+_1\) \\
4 & \(c^+_3\) & \(c^+_3\) & \(c^+_2\) \\
5 & \(c^-_3\) & \(c^-_3\) & \(c^-_2\) \\
6 & \(c^-_2\) & \(c^-_1\) & \(c^-_1\) \\
7 & \(c^-_1\) & \(c^-_2\) & \(c^-_3\) \\
\bottomrule
\end{tabular}

\end{center}

We argue the ranking \(a^* \succ c^+_1 \succ c^+_2 \succ c^+_3 \succ c^-_3 \succ c^-_2 \succ c^-_1\) is a PMC ranking but no linear reward function can position \(a^*\) at the top of this ranking.

For any reward vector \(\theta=(\theta_1, \theta_2, \theta_3) \in \mathbb{R}^3\), the reward for \(a^*\) is :
\[
r_{a^*}^{\theta} = \frac{1}{4}(\theta_1 + \theta_2 + \theta_3)
\]
Given the placement of \(c^{\pm}_i\) at the standard basis vectors, each \(c_i^{\pm}\) achieves a reward equivalent to one of the absolute values of the components of \(\theta\), thus surpassing \(r_{a^*}^{\theta}\) since
\[
r_{a^*}^{\theta} < \max_i |\theta_i|.
\]

\section{Additional Axiomatic Properties of Social Choice-Based Rules}\label{app:more-social-choice}

We begin by stating another prominent axiom from social choice theory. 

\begin{definition}[Separability]\label{def:weak_separability}
A ranking aggregation rule satisfies \emph{ranking separability} (or \emph{separability} for short) if, when two profiles yield identical output rankings, when combined into a single profile, this should also produce the same output ranking.
\end{definition}
Ranking separability preserves consistency in aggregation outputs and ensures stable decisions across similar preference distributions.

\subsection{Copeland Violates Separability}\label{app:Copeland-Separability}

\begin{theorem}
    Both Copeland (in traditional social choice) and LCPO (in linear social choice) fail separability.
\end{theorem}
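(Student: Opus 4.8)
The plan is to exploit that both rules are determined entirely by the pairwise majority tournament (via Copeland scores), so a separability violation amounts to finding two profiles $\bsigma_1, \bsigma_2$ whose union $\bsigma_1 \cup \bsigma_2$ induces a different Copeland ranking from each of them, even though the two share an identical output. First I would record the structural reason such a violation cannot use tie-free outputs: if a profile's Copeland scores are all distinct they must be a permutation of $0, 1, \dots, m-1$, forcing a transitive tournament, which is rigid (uniquely determined by its scores). Two profiles with the same tie-free Copeland ranking therefore have the same majority graph, and unions preserve a majority direction whenever both profiles agree on it, so the union would be unchanged. Hence any violation must route through \emph{score ties}, i.e.\ Condorcet cycles and tie-breaking, and this dictates the entire construction.

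For classical Copeland I would take three candidates $\{a, b, c\}$ and two profiles that are each Condorcet cycles, so that all three candidates receive Copeland score $1$ and the rule outputs the same tie-broken ranking $\tau$ for both. Concretely, let $\bsigma_1$ realize the cycle $a \succ b \succ c \succ a$ and $\bsigma_2$ the reverse cycle $a \succ c \succ b \succ a$, each with independently tunable majority margins (realizable with integer numbers of voters by inflating the standard three-voter Condorcet profile on individual pairs). In the union, the net margin on each of the three pairs is the difference of the two profiles' margins on that pair; since these three differences can be chosen independently (they use six independent positive margins), the union's tournament can be made \emph{any} tournament on $\{a,b,c\}$, in particular a transitive one with a unique strict Copeland ranking $\rho$. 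I would pick the margins so that the top candidate of $\rho$ differs from the top candidate of $\tau$; then $f(\bsigma_1) = f(\bsigma_2) = \tau$ while $f(\bsigma_1 \cup \bsigma_2) = \rho \neq \tau$, which is the desired violation, and it holds for whatever fixed tie-breaking the rule commits to.

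For LCPO I would reuse the same three cyclic profiles but embed the candidates as feature vectors in $\mathbb{R}^2$ in general position, so that the three separating lines through the origin (orthogonal to $\x_a - \x_b$, $\x_b - \x_c$, $\x_a - \x_c$) cut the plane into $6$ sectors and \emph{all} six rankings of $\{a,b,c\}$ are feasible; the individual voter preferences are transitive and hence feasible, as the model requires. Since a $3$-cycle has no unanimously ranked pair, the PO restriction is vacuous, so on $\bsigma_1$ and $\bsigma_2$ — where all Copeland scores tie and every ranking is feasible — LCPO's leximax step returns the same ranking for both, while on the union its unique, feasible Copeland ranking $\rho$ is returned outright. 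Choosing the margins exactly as above so that $\rho$ differs from this common individual output completes the LCPO case.

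The main obstacle is precisely the rigidity observation above: because tie-free tournaments are uniquely reconstructible from their scores, no violation can avoid Copeland ties, so the argument is forced to engage with tie-breaking and to arrange that the union is strict while the two inputs are fully tied. The delicate point is then guaranteeing that the union's strict output disagrees with whatever (arbitrary) tie-break the rule applies to the fully-tied inputs; I handle this through the independent controllability of the three pairwise margins, which lets me steer the union's Condorcet winner away from the tie-break's top candidate. For LCPO the only extra care is in the choice of features, which I keep generic enough that all relevant rankings are feasible and the PO constraint stays vacuous, so the linear rule reduces cleanly to the classical Copeland argument.
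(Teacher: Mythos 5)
Your construction is correct and establishes the theorem, but by a genuinely different and more economical route than the paper's. The paper exhibits two explicit profiles on seven candidates (five and three voters) in which $a$ strictly leads $b$ in Copeland score in each profile separately (scores $5{>}4$ and $6{>}5$) yet $b$ overtakes $a$ ($6$ vs.\ $5$) in the union; the only score ties occur among three candidates irrelevant to the violation, so the flip is witnessed by a strictly score-separated pair and the argument leans on tie-breaking only minimally. You instead begin with a structural rigidity lemma\,---\,tie-free Copeland outputs force the scores $0,1,\dots,m-1$ and hence identical transitive majority tournaments, which unions preserve\,---\,showing that any violation \emph{must} pass through score ties, and then exploit this maximally: two opposite three-cycles with independently tunable margins (realizable by weighting the three rotations of the Condorcet profile), so that both inputs are fully tied and output the tie-break order $\tau$, while the union's three pairwise margins are steered to a transitive tournament whose strict Copeland ranking differs from $\tau$. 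The rigidity observation is a nice addition the paper does not state, and your example is much smaller. The one place your route is more fragile is that the violation lives entirely inside the tie-breaking: you need the rule to return the \emph{same} ranking on two profiles whose majority tournaments are opposite cycles, which holds under a fixed priority order (the reading both you and the paper adopt via ``consistent tie-breaking'') but could fail for a tie-breaking rule that inspects the tournament; the paper's example is less exposed because its two profiles induce identical sub-tournaments on the tied set and the $a$-versus-$b$ flip involves no tie at all. The LCPO reductions are essentially the same in both proofs (embed so that all rankings are feasible and the PO restriction is vacuous), with you using three generic points in $\mathbb{R}^2$ instead of unit vectors in $\mathbb{R}^7$.
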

\begin{proof}
    
Consider the following two profiles on 7 candidates with five and three voters each:

%

\begin{center}	
\begin{tabular}{@{}cccccc@{}}
\toprule
Rank & \(v_1\) & \(v_2\) & \(v_3\) & \(v_4\) & \(v_5\) \\
\midrule
1 & \(a\) & \(b\) & \(b\) & \(c\) & \(d\) \\
2 & \(g\) & \(a\) & \(a\) & \(e\) & \(c\) \\
3 & \(d\) & \(c\) & \(e\) & \(f\) & \(f\) \\
4 & \(e\) & \(e\) & \(d\) & \(g\) & \(g\) \\
5 & \(f\) & \(d\) & \(f\) & \(b\) & \(b\) \\
6 & \(b\) & \(g\) & \(g\) & \(a\) & \(e\) \\
7 & \(c\) & \(f\) & \(c\) & \(d\) & \(a\) \\
\bottomrule
\end{tabular}
\hspace{1cm}
\begin{tabular}{@{}cccc@{}}
\toprule
Rank & \(v_6\) & \(v_7\) & \(v_8\) \\
\midrule
1 & \(a\) & \(b\) & \(c\) \\
2 & \(d\) & \(a\) & \(a\) \\
3 & \(b\) & \(e\) & \(e\) \\
4 & \(f\) & \(d\) & \(b\) \\
5 & \(c\) & \(c\) & \(f\) \\
6 & \(g\) & \(g\) & \(g\) \\
7 & \(e\) & \(f\) & \(d\) \\
\bottomrule
\end{tabular}
\end{center}

In the first profile, the Copeland scores are $5, 4, 3, 3, 3, 2, 1$ for candidates $a, b, c, d, e, f, g$, respectively. Similarly, in the second profile, they are $6, 5, 3, 3, 3, 1, 0$. So under any consistent tie-breaking rule, both of these profiles would output the ranking $a \succ b \succ c \succ d \succ e \succ f \succ g$.

However, if we combine these two profiles, then the score of $a$ is $5$ while the score of $b$ is $6$, and thus, $b$ will be ranked above $a$, violating separability.

To see that this also holds for LCPO, note that when every ranking is feasible, LCPO coincides with Copeland. We can simply have 7 candidates in $\mathbb{R}^7$ all located at unit vectors, and voters with inputs from the above profiles.
\end{proof}

\subsection{Linear Kemeny Rule}\label{app:Kemeny}
Next, we consider a different rule from social choice theory, the Kemeny rule, which can be transformed to the linear setting while maintaining the separability can be achieved along with PMC. Given an input profile $\bsigma$, the Kemeny rule returns a ranking $\sigma^*$ that minimizes the total number of pairwise disagreements with  voters rankings,
i.e. 
\[\sigma^*\in \argmin_{\sigma\in S^m}
\sum_{i \in V} \sum_{(a, b): a \succ_{\sigma_i}b}  \mathbb{1} (b \succ_{\sigma} a) 
\]
where $S^m$ contains all  possible permutations of the $m$ candidates. This expression can be equivalently written as
\[\sigma^*\in \argmin_{\sigma\in S^m}
\sum_{a \ne b \in C} n_{a \succ b}(\bsigma) \cdot  \mathbb{1} (b \succ_{\sigma} a). 
\]
Here, we define the {\em linear Kemeny rule}, which outputs a parameter vector $\theta^*$ that induces a ranking that minimizes the total number of pairwise disagreements with voters' rankings, i.e.,
\[\theta^*\in  \argmin_{\theta \in \mathbb{R}^d}
\sum_{a \ne b \in C} n_{a \succ b}(\bsigma) \cdot \mathbb{1} (r_\theta(b) > r_\theta(a)). 
\]
Note that this rule conforms to the standard loss formulation, where the loss function is binary: it is $0$ if two rankings agree with respect to the relative ranking of a pair of candidates and $1$ otherwise. Since binary loss is not convex, it does not fit in the impossibility result of \Cref{thm:loss-based-PO-PMC}.

Note that Kemeny is generally NP-hard to compute; however, even for linear Kemeny, there is at least an exponential time aglorithm by brute-force computing the score of every ranking, and determining whether or not it is feasible.

\begin{theorem}
Linear Kemeny satisfies  PMC and separability.
\end{theorem}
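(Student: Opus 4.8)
The claim has two independent parts, and I would prove each separately. For \emph{PMC}, the goal is to show that whenever a feasible PMC ranking $\sigma$ exists for a profile $\bsigma$, linear Kemeny outputs exactly $\sigma$. The natural approach is to argue that $\sigma$ is the unique minimizer of the linear Kemeny objective among \emph{all} feasible rankings. First I would recall that linear Kemeny fits the loss-based framework with the binary loss $\ell(x) = \mathbb{1}(x > 0)$, so the objective evaluated on a feasible ranking $\sigma'$ equals $\sum_{a \ne b} n_{a \succ b}(\bsigma) \cdot \mathbb{1}(b \succ_{\sigma'} a)$, i.e., the total number of voter-disagreements with $\sigma'$ counted over ordered pairs. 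The key observation is that this sum decomposes pairwise: each unordered pair $\{a,b\}$ contributes either $n_{a \succ b}$ or $n_{b \succ a}$ depending on which way $\sigma'$ orders the pair. Since the PMC ranking $\sigma$ orders every pair in the majority direction, it contributes $\min(n_{a \succ b}, n_{b \succ a})$ on every pair, simultaneously minimizing each term. Because the PMC condition uses strict majority ($w_{a \succ b} > 1/2$), the minimizing direction is \emph{strict} on every pair, so any feasible $\sigma' \ne \sigma$ must disagree with $\sigma$ on at least one pair and hence incur strictly larger loss. Thus $\sigma$ is the unique feasible minimizer, and since $\sigma$ is feasible, linear Kemeny outputs it. This establishes PMC.

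\textbf{Separability.}
For separability I would use the pairwise-decomposition of the Kemeny objective more globally. Let $\bsigma^1$ and $\bsigma^2$ be two profiles yielding the same output ranking $\sigma^*$, and let $\bsigma$ be their concatenation. The crucial algebraic fact is that the counts are additive: $n_{a \succ b}(\bsigma) = n_{a \succ b}(\bsigma^1) + n_{a \succ b}(\bsigma^2)$ for every ordered pair. Consequently, the combined objective is the sum of the two individual objectives, $\mathcal{L}(\sigma'; \bsigma) = \mathcal{L}(\sigma'; \bsigma^1) + \mathcal{L}(\sigma'; \bsigma^2)$, evaluated over the same feasible ranking $\sigma'$. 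Since $\sigma^*$ minimizes each summand over the feasible rankings (it is the output on both $\bsigma^1$ and $\bsigma^2$), it minimizes the sum as well. Hence $\sigma^*$ is an optimal feasible ranking for $\bsigma$, which is what separability requires. I would be careful here about tie-breaking: separability as stated only demands that the combined profile \emph{can} (or does, under a consistent rule) output $\sigma^*$; I would phrase this as $\sigma^*$ being a minimizer for $\bsigma$ and note that a consistently tie-breaking linear Kemeny then outputs it.

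\textbf{Main obstacle.}
The genuinely delicate point, and the one I would be most careful about, is the interaction between optimality over \emph{feasible} rankings and the definitions of the axioms. For PMC, the hypothesis explicitly grants that the PMC ranking $\sigma$ is feasible, so the argument that $\sigma$ is the global feasible minimizer goes through cleanly; the strictness of the majority is what rules out other optima. For separability, the subtlety is that the set of feasible rankings is identical across $\bsigma^1$, $\bsigma^2$, and $\bsigma$ (it depends only on the candidate feature vectors, not the votes), so restricting the optimization to feasible rankings is consistent across all three profiles — this is exactly what makes the additive decomposition respect feasibility. I expect that once this shared-feasible-domain observation is made explicit, both parts reduce to the standard classical Kemeny arguments, which themselves hinge only on the pairwise additivity of the disagreement count. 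The one routine check I would not skip is confirming that the infimum/minimum issues plaguing the convex loss functions do not arise here: the binary loss takes finitely many values over finitely many feasible rankings, so a minimum is always attained and no limiting argument is needed.
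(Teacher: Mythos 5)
Your proof is correct and follows essentially the same route as the paper's (which is only a two-sentence sketch): for PMC, the PMC ranking minimizes each pairwise disagreement term separately \emdash hence is optimal even among non-transitive pairwise orderings and a fortiori among feasible rankings \emdash and for separability, the Kemeny objective is additive over concatenated profiles, so a common minimizer of both summands minimizes the sum. Your extra care about tie-breaking and about the feasible set depending only on the feature vectors (not the votes) only makes the argument more complete than the paper's version.
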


\begin{proof}
 PMC  holds because the linear Kemeny score minimizes disagreements even among non-transitive rankings, making the PMC ranking the optimal choice whenever it is feasible. Separability is evident as the Kemeny score of a ranking over two datasets is simply the sum of the scores in each dataset. If the same ranking minimizes the score in both datasets independently, it will also minimize the score in their combination.
\end{proof}

\begin{theorem}\label{thm:kemeny_po}
Linear Kemeny does not satisfy PO or majority consistency.
\end{theorem}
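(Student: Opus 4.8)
The plan is to establish each failure through an explicit instance, using the linear-program characterization of feasibility from \Cref{ftn} to determine exactly which rankings are feasible and then comparing their Kemeny scores directly. Unlike \Cref{thm:loss-based-PO-PMC}, convexity is unavailable here, so I cannot appeal to a continuity/limit argument; instead the whole argument will be combinatorial, reducing in each case to a finite comparison of disagreement counts over the (finitely many) feasible rankings.

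For the PO violation, I would build a profile in which some pair $a \succ b$ is unanimous, so every voter contributes $n_{a \succ b} = n$ and $n_{b \succ a} = 0$, and then arrange the feature vectors so that honoring $a \succ b$ is geometrically incompatible with the orderings on which the electorate is otherwise highly concordant. Concretely, I would mimic the ``two opposed blocs plus perturbed copies'' geometry of \Cref{thm:loss-based-PO-PMC}: place $a, b$ so that they are feasibly separable in the direction $a \succ b$ only inside a cone of parameter vectors $\theta$ that is forced to disagree with a large common set of other pairs, while the reversing choice $b \succ a$ opens a cone in which those many pairs are all ordered in the majority direction. The aim is to make the disagreement saved on the other pairs strictly exceed the penalty of $n$ incurred on $(a,b)$, so that the Kemeny-optimal feasible ranking must place $b \succ a$ even though all voters rank $a \succ b$; I would then verify feasibility of the reversing ranking and the strict suboptimality of every PO-respecting feasible ranking by the LP test and a direct score count.

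For majority consistency, I would exploit the central-placement phenomenon behind the PMC-infeasibility example of \Cref{app:proof_pmc_feasibility}: placing the majority winner $a$ near the centroid of the remaining candidates forces any $\theta$ that ranks $a$ first into a narrow, near-symmetric cone, which in turn pins down a particular (and, by design, costly) ordering of the other candidates. I would choose the profile so that $a$ is ranked first by a strict majority, hence is a Condorcet winner and is feasibly rankable first (each voter's ranking is feasible), while demoting $a$ by one or two positions frees $\theta$ to become extreme and to order the remaining candidates in their majority directions. Since $a$ beats every other candidate pairwise, demoting it past $k$ candidates costs $\sum_{c}\left(n_{a \succ c} - n_{c \succ a}\right)$ on the $a$-pairs, so the crux is to engineer the geometry so that the disagreement saved on the remaining pairs strictly exceeds this amount, forcing Kemeny not to rank $a$ first.

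The step I expect to be the main obstacle, in both cases, is the geometric design: making the feasibility constraints genuinely force the trade-off, i.e., ensuring that \emph{every} feasible ranking respecting PO (respectively, ranking $a$ first) is strictly dominated in Kemeny score, rather than merely some of them. Because feasibility in linear social choice is governed by which hyperplane-arrangement cell $\theta$ lies in, I would fix the feature vectors first, enumerate the feasible rankings via \Cref{ftn}, and only then tune the bloc proportion $p$ (and the number of perturbed copies) so that the discrete score comparison comes out strictly in favor of the axiom-violating ranking.
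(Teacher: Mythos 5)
Your high-level strategy---an explicit instance plus a finite comparison of Kemeny scores over the feasible rankings, arranged so that the unanimous (respectively, majority-favored) pair is the cheapest thing to sacrifice---has the right shape and matches the paper's. But the proposal is missing the one idea that makes such an instance exist, and the concrete template you propose to reuse cannot supply it. The geometry of \Cref{thm:loss-based-PO-PMC} is engineered so that the majority ranking $a \succ a' \succ b \succ b' \succ c' \succ c$ \emph{is} feasible; the loss-based rules fail anyway because they trade off reward \emph{magnitudes}. Linear Kemeny uses a $0$--$1$ loss and, as the paper shows, satisfies PMC, so on any instance with a feasible PMC ranking (in particular, on anything resembling the two-bloc-plus-perturbed-copies instance) it simply outputs that ranking and respects PO. A PO violation for Kemeny therefore needs a genuinely different mechanism: the feasibility structure must force \emph{every} ranking honoring the unanimous pair to also reverse many majority pairs---while each voter's own (feasible, PO-respecting) ranking must itself be strictly dominated. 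You correctly flag this as ``the main obstacle,'' but the proposal does not say how to create it.

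The paper's construction achieves it with a cyclic chain of coordinate comparisons: seven pairs/groups of candidates whose internal orders are controlled by the signs of $\theta_1-\theta_2,\ \theta_2-\theta_3,\ \ldots,\ \theta_7-\theta_1$, which sum to zero and hence cannot all be positive, so every feasible ranking must reverse at least one link. Six of the links are blocks of three pairs each supported by a $5/6$ majority (reversing one costs $15$ disagreements instead of $3$), while the seventh link is the single unanimously ordered pair $1 \succ 2$ (reversing it costs $6$ instead of $0$); Kemeny therefore reverses the unanimous pair. The same instance disposes of majority consistency for free, since candidate $1$ is ranked first by every voter. Your separate centroid-based construction for majority consistency points in a workable direction (unlike the PO template), but it is left speculative: you would still have to exhibit feature vectors and a profile for which every feasible ranking placing $a$ first is strictly dominated, and nothing in the proposal pins that down. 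As written, this is a research plan with the decisive construction absent.
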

\begin{proof}
Consider the scenario with 20 candidates whose feature vectors are represented in the table below:

\begin{center}
\begin{tabular}{@{}cl@{}}
\toprule
Candidate & Feature Vector          \\ \midrule
1         & \( (2000000, 0, 0, 0, 0,0,0 ) \) \\
2         & \( (0, 2000000, 0, 0, 0,0, 0) \) \\
3         & \( (0, 200000, 0, 0, 0,0, 0) \)  \\
4         & \( (0, 100000, 100000, 0, 0,0, 0) \) \\
5         & \( (0, 0, 200000, 0, 0,0, 0) \) \\
6         & \( (0, 0, 20000, 0, 0,0, 0) \)  \\
7         & \( (0, 0, 10000, 10000, 0,0, 0) \) \\
8         & \( (0, 0, 0, 20000, 0,0, 0) \)  \\
9         & \( (0, 0, 0, 2000, 0, 0, 0) \)   \\
10        & \( (0, 0, 0, 1000, 1000,0, 0) \)  \\
11        & \( (0, 0, 0, 0, 2000,0 , 0) \)   \\
12          & \( (0, 0, 0, 0, 200, 0, 0) \)   \\
13        & \( (0, 0, 0, 0, 100,100, 0) \)  \\
14        & \( (0, 0, 0, 0, 0,200, 0  ) \)   \\
15          & \( (0, 0, 0, 0, 0, 20, 0) \)   \\
16        & \( (0, 0, 0, 0, 0,10, 10) \)  \\
17        & \( (0, 0, 0, 0, 0,0, 20  ) \)   \\
18        & \( (0, 0, 0, 0, 0, 0, 2) \)    \\
19        & \( (1, 0, 0, 0, 0,0,  1) \)    \\
20        & \( (2, 0, 0, 0, 0, 0, 0) \)    \\ \bottomrule
\end{tabular}
\end{center}

Each vector is constructed such that candidates are prioritized based on the magnitude of their first non-zero entry, leading to a natural ordering within grouped subsets: \( \{1, 2\} \succ \{3, 4, 5\} \succ \{6, 7, 8\} \succ \{9, 10, 11\} \succ \{12, 13, 14\} \succ \{15, 16, 17\}\succ \{18, 19, 20\} \). We will have six voters, with rankings induced by the parameter vectors described below.
\begin{center}
\begin{tabular}{@{}cl@{}}
\toprule
Voter & Parameter Vector        \\ \midrule
\(v_1\) & \( (2, 1, 7, 6, 5, 4, 3) \) \\
\(v_2\) & \( (3, 2, 1, 7, 6, 5, 4) \) \\
\(v_3\) & \( (4, 3, 2, 1, 7, 6, 5) \) \\
\(v_4\) & \( (5, 4, 3, 2, 1, 7, 6) \) \\ 
\(v_5\) & \( (6, 5, 4, 3, 2, 1, 7) \) \\ 
\(v_6\) & \( (7, 6, 5, 4, 3, 2, 1) \) \\ 
\bottomrule
\end{tabular}
\end{center}

Each voter ranks candidates within the group in increasing order except for a single reversed group. For instance, \(v_1\) ranks candidates as \(1 \succ 2 \succ \mathbf{5 \succ 4 \succ 3}\) and so forth. Under this setup, no parameter vector \(\theta\) can create a ranking that satisfies all voters' preferences due to the cyclic nature and individual group preferences.

We can check that linear Kemeny rule outputs a ranking in which candidate 2 is ranked above 1. From this analysis, we also conclude that the rules does not  satisfy majority consistency either.
\end{proof}

A possibly easy fix to the problem that linear Kemeny does not satisfy PO would be to enforce the PO criterion, as we did for the LCPO, i.e., to restrict to parameter vectors that respect PO. However, we show that if we do that, then linear Kemeny subject to PO does not satisfy separability anymore.

\begin{theorem}
        Linear Kemeny subject to PO violates separability and majority consistency. 
\end{theorem}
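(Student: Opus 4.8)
The plan is to give two explicit constructions, one for each axiom, both exploiting the single feature that separates linear Kemeny subject to PO from plain linear Kemeny: the Pareto constraint. Recall that the preceding theorem shows plain linear Kemeny is separable precisely because its score is additive across a combined profile, so the only thing that can break separability is the PO restriction. The crucial observation is that the enforced dominations behave \emph{non-additively} under profile union: a pair $a \succ b$ is Pareto-dominated in a combined profile only if it is dominated in \emph{both} constituent profiles, so combining two profiles can only \emph{shrink} the set of PO constraints and hence \emph{enlarge} the set of admissible rankings. Moreover, \Cref{thm:kemeny_po} already exhibits the key phenomenon that makes PO bite at all: because of the linear feasibility restriction, the unrestricted linear Kemeny optimum can itself \emph{violate} a domination (there it ranked candidate $2$ above the dominating candidate $1$). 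This is the lever I would pull in both parts.

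For separability I would chain two instances of the \Cref{thm:kemeny_po} phenomenon. I would construct profiles $\bsigma^{(1)}$ and $\bsigma^{(2)}$ over a common candidate set, with feature vectors chosen (via a grouped/cyclic layout as in \Cref{thm:kemeny_po}) so that no parameter vector realizes the globally optimal ranking, forcing a genuine linear obstruction. The goal is to arrange: (i) a single feasible ranking $\tau$ that is the unrestricted (additive) linear Kemeny optimum of $\bsigma^{(1)}$, of $\bsigma^{(2)}$, and hence of their union; (ii) that $\tau$ violates a domination in each individual profile, so enforcing PO excludes it and each profile instead outputs the \emph{same} PO-restricted feasible optimum $\sigma$; and (iii) that the domination blocking $\tau$ in $\bsigma^{(1)}$ is reversed by some voter in $\bsigma^{(2)}$, and vice versa, so that \emph{neither} survives in the union. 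Then $\tau$ becomes PO-admissible in the combined profile, and by additivity of the Kemeny score it is the combined winner, giving output $\tau \neq \sigma$ and violating separability (\Cref{def:weak_separability}). Admissibility of $\sigma$ and $\tau$ would be certified by the linear-feasibility test (the system of strict pairwise inequalities described earlier), and the score comparison by direct computation.

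For majority consistency (\Cref{def:maj_consistency}) I would adapt the same grouped, cyclic construction. In \Cref{thm:kemeny_po} the near-top candidate is ranked first \emph{unanimously}, so enforcing PO automatically forces it to the top and repairs the violation; the fix is therefore to break unanimity while preserving a strict \emph{majority} of first-place support. Concretely, I would perturb the profile so that the intended winner $a$ is ranked first by a strict majority of voters\emdash hence is a Condorcet winner and would top the classical Kemeny ranking\emdash but is ranked below some candidate by at least one voter, so that no domination involving $a$ is enforced and PO cannot rescue it. The cyclic structure guarantees that no parameter vector realizes the globally optimal $a$-first ranking, and among the linearly feasible, PO-respecting rankings the Kemeny optimum places a different candidate on top, yielding the violation.

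The main obstacle in both parts is the joint bookkeeping within a single explicit instance: I must simultaneously control (a) the linear feasibility of the target rankings through the separating-hyperplane / linear-program characterization, (b) the exact set of Pareto dominations and how it changes under profile union, and (c) the additive Kemeny scores. For separability the delicate point is engineering two profiles whose PO-restricted optima coincide at $\sigma$ yet whose blocking dominations cancel in the union while $\tau$ remains the additive optimum throughout; for majority consistency the delicate point is retaining the cyclic linear obstruction after weakening unanimity down to a bare majority. Once an explicit instance is fixed, the remaining verification is a finite, if tedious, computation.
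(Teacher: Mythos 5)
Your diagnosis of the mechanism is exactly the one the paper exploits: Pareto dominations are (contained in) the \emph{intersection} of the dominations of the constituent profiles, so taking a union of profiles can only relax the PO restriction, while the linear-feasibility obstruction from \Cref{thm:kemeny_po} is what makes the unrestricted Kemeny optimum violate a domination in the first place. Your majority-consistency argument is essentially the paper's: its combined $9$-voter profile has candidate $1$ ranked first by $8$ of $9$ voters, yet with the domination $1 \succ 2$ destroyed by the one dissenting voter the rule ranks $2$ above $1$. Where you diverge is the separability construction, and there you over-engineer it. You ask for a symmetric instance in which the target ranking $\tau$ is blocked by a binding domination in \emph{each} constituent profile, with the two blocking dominations cancelling each other in the union; this is sufficient but unnecessary, and it is the hardest part of your plan to realize. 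The paper's instance is asymmetric: the first profile is exactly the \Cref{thm:kemeny_po} instance (where PO bites and forces the output to one of the input rankings, say $v_1$'s), and the second profile consists of two voters with $v_1$'s ranking plus one voter who reverses $1 \succ 2$ --- so in the second profile PO never bites at all, the Kemeny optimum is trivially $v_1$'s ranking by majority of identical votes, and the profile's only job is to agree on the output and kill the single domination. Only one binding PO constraint is needed. Your write-up also stops at the level of a constraint list rather than an explicit witness; since this is a violation claim, the proof is not complete until concrete feature vectors and voter parameter vectors are exhibited and the Kemeny scores checked, which is precisely the ``finite, if tedious, computation'' you defer and the paper discharges by reusing its earlier instance. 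With the simpler asymmetric template, that instantiation is routine, so I would regard your approach as correct but needlessly harder than it has to be on the separability half.
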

\begin{proof}
First, consider a set of candidates and a profile similar to the one that is given in the proof of~\Cref{thm:kemeny_po}. When we restrict to reward functions that  $1$ above  $2$, then we can check that Linear Kemeny subject to PO outputs one of the rankings that are in the input profile. Without loss of generality, assume that it outputs the ranking of  $v_1$.

Second, consider the same  set of candidates
and three voters, with rankings induced by the following parameter vectors:
\begin{center}
\begin{tabular}{@{}cl@{}}
\toprule
Voter & Parameter Vector        \\ \midrule
\(v_1'\) & \( (2, 1, 7,  6, 5, 4, 3) \) \\
\(v_2'\) & \( (2, 1, 7,  6, 5, 4, 3) \) \\
\(v'_3\) & \( (1, 7,  6, 5, 4, 3, 2) \) \\
\bottomrule
\end{tabular}
\end{center}
In this case,  Linear Kemeny subject to PO outputs the ranking of $v_1'$ which is the same with this of voter $v_1$.

When the two profiles are combined, then we do not anymore restrict on rankings in which $1$ is above $2$, since $1$ does not Pareto dominates $2$ anymore. Then,   we can check  that linear Kemeny subject to PO outputs a different ranking than before in which $2$ is ranked above $1$. 
From this example, we see that linear Kemenery subject to PO still violates majority consistency. 
\end{proof}

\subsection{Leximax Plurality}\label{app:leximax-plurality}

Plurality is probably the most ubiquitous  voting rule in the world. Its ranking variant ranks the candidates in decreasing order with respect to their \emph{plurality scores}. The plurality score of a candidate is equal to the number of her appearances in the first position. This rule is known to satisfy several axioms but in linear social choice cannot be directly applied, as not all rankings are feasibly.

Similarly to leximax Copeland, we define Leximax Plurality as follows. It ranks first the candidate with the highest plurality score that can be ranked first under some parameter vector. Subject to this first position, it ranks second the candidate with the highest plurality score that can feasibly be ranked second, and so on, until all the positions are filled.

\begin{theorem}
    Leximax Plurality satisfies  majority consistency,  winner monotonicity and  separability.
\end{theorem}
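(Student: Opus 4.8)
The plan is to prove each of the three axioms separately, following the template established for leximax Copeland subject to PO, but now with plurality scores in place of Copeland scores and \emph{without} the ``subject to PO'' restriction. The key structural observation that drives all three proofs is that Leximax Plurality, at each position $r+1$, selects among the feasible candidates (those that can be placed in position $r+1$ while respecting the partial ranking already fixed) the one with the highest plurality score. I would state up front the analogue of the feasibility fact used for LCPO: whenever a candidate $a$ is ranked first by at least one voter, the ranking $\sigma_i$ submitted by that voter is feasible by assumption, so $a$ \emph{can} feasibly be ranked first. This is the lever used repeatedly.

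For \textbf{majority consistency}, suppose candidate $a$ is ranked first by a strict majority of voters. Then $a$ has plurality score strictly greater than $n/2$, so it has a strictly higher plurality score than every other candidate (since plurality scores sum to $n$, no other candidate can also exceed $n/2$). Since some voter ranks $a$ first, $a$ can feasibly be placed first, so at the very first step of the leximax process $a$ is among the feasible candidates and has the uniquely highest plurality score; hence Leximax Plurality ranks $a$ first. For \textbf{winner monotonicity}, I would mimic the LCPO argument: let $\sigma = f(\bsigma)$ rank $a$ first, and form $\bsigma'$ by elevating $a$ in a single voter's ranking. Elevating $a$ can only (weakly) increase $a$'s plurality score and cannot increase any other candidate's plurality score, so $a$ retains the strictly-highest-or-tied status it needs. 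The crucial point is that moving $a$ up in one vote does not change \emph{feasibility} of placing $a$ first (it was already feasible under $\bsigma$, witnessed by some voter's feasible ranking, and $a$ is still ranked at least as high by that voter), and it cannot help any competitor's score; combining ``$a$ still feasibly rankable first'' with ``no competitor's score rose above $a$'s'' shows the leximax process still ranks $a$ first under $\bsigma'$. I expect this to be the most delicate step, because I must be careful that there is no ``subject to PO'' restriction here to control the feasible set, so the argument relies purely on the plurality-score comparison plus the single-witness feasibility fact, and I must verify that ties are broken consistently so that the winner is not displaced by a tie-breaking artifact.

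For \textbf{separability}, I would argue directly from how the leximax process depends on the two ingredients it uses: plurality scores and the feasibility structure of the candidate set. The feasibility structure (which rankings, and in particular which candidates, can occupy which positions) depends only on the feature vectors $\x_c$, which are identical across the two profiles and their combination. Plurality scores, on the other hand, are \emph{additive} across profiles: the plurality score of each candidate in the combined profile is exactly the sum of its scores in the two sub-profiles. The plan is to show by induction on position $r$ that if both profiles produce the same output ranking $\sigma$, then the combined profile also produces $\sigma$. If candidate $c_{r+1}$ is selected at position $r+1$ in both sub-profiles, it is the feasible-given-$\sigma_r$ candidate with the highest plurality score in each; since the feasible set is the same and scores add, $c_{r+1}$ also has the highest combined score among the same feasible candidates, so it is selected in the combination as well (here I would note, as the proof of separability for linear Kemeny implicitly does, that the relevant comparisons are preserved under summation). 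The main obstacle across all three parts is handling ties cleanly — making sure the ``breaking ties consistently'' assumption that turns leximax plurality into a well-defined ($C1$-like) rule is invoked uniformly so that the same candidate is chosen in the sub-profiles and the combination — but since the feasible set is determined solely by geometry and is identical in all three instances, the additivity of plurality scores should make the induction go through.
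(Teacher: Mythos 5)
Your proposal is correct and follows essentially the same route as the paper's proof: majority consistency via the fact that the top-plurality candidate is always feasibly rankable first (witnessed by a voter who ranks it first), winner monotonicity via the observation that elevating $a$ can only improve $a$'s plurality score relative to all others, and separability by induction on positions using the additivity of plurality scores together with the profile-independence of the feasibility structure. Your explicit attention to consistent tie-breaking is a point the paper glosses over, but it is a refinement of, not a departure from, the same argument.
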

\begin{proof}

    Note that leximax Plurality always returns a ranking in which the candidate with the highest plurality score is ranked first, since there exists at least one feasible ranking in which this candidate is ranked first. From this observation, we immediately see that leximax Plurality satisfies majority consistency.

    Now, suppose that on input profile $\bsigma$, the rule outputs a ranking such that candidate $a$ is ranked first. This means that $a$ has the highest plurality score. Now, consider a profile $\bsigma'$ which is similar to $\bsigma$ with the only exception being a ranking in which $a$ is ranked in a higher position. It is clear that $a$ continues to have the highest plurality score and therefore leximax Plurality will output a ranking in which $a$ is ranked first. Therefore, winner monotonicity is satisfied.
.

 It remains to prove that the rule satisfies separability. Suppose that under two different profiles $\bsigma_1$ and $\bsigma_2$, the rule outputs $\sigma$, and under the aggregated profile $\bsigma_3$, it outputs $\sigma'$. We will show that $\sigma = \sigma'$. One main observation is that if a candidate $a$ has a higher plurality score than a candidate $b$ under both $\bsigma_1$ and $\bsigma_2$, then $a$ has a higher plurality score than $b$ under $\bsigma_3$ as well. We will show the desired property by induction on the positions of the ranking $\sigma$. Start from the first position in which say candidate $a$ is ranked first. From above, we know that $a$ has the highest plurality score under both  $\bsigma_1$ and $\bsigma_2$, which remains true in  $\bsigma_3$, and therefore $a$ is ranked first in $\sigma'$.        Now, assume that up to position $t-1$, $\sigma$ and $\sigma'$ are similar and  denote with $a'$ the candidate that is ranked at position $t$ of $\sigma$. We denote by $S_1$, $S_2$  and $S_3$ the set of candidates that are not ranked among the first $t$
    positions in $\sigma$ and have higher plurality score than $a'$ under $\bsigma_1$, $\bsigma_2$, and $\bsigma_3$ respectively. Since, $a'$ is ranked at the $t$-th postion in $\sigma$, we get  that, subject to the fixed first $t-1$ position, no candidate in $S_1 \cup S_2$ can be ranked at the $t$-th position. The theorem follows by noticing that  $S_3 \subseteq S_1 \cup S_2$, which follows from the main observation above.  
\end{proof}

\end{document}